\documentclass[sigconf]{acmart}
\settopmatter{printacmref=false} 
\renewcommand\footnotetextcopyrightpermission[1]{} 

\AtBeginDocument{%
  \providecommand\BibTeX{{%
    \normalfont B\kern-0.5em{\scshape i\kern-0.25em b}\kern-0.8em\TeX}}}

\setcopyright{acmcopyright}
\copyrightyear{2022}
\acmYear{2022}
\acmDOI{10.1145/1122445.1122456}

\acmConference[CIKM '22]{CIKM '22: Conference on Information and Knowledge Management}{October 17-22, 2022}{Atlanta, Georgia, USA}
\acmBooktitle{CIKM '22: Conference on Information and Knowledge Management, October 17-22, 2022, Atlanta, Georgia, USA}
\acmPrice{15.00}
\acmISBN{978-1-4503-XXXX-X/18/06}

\usepackage{citehack}
\usepackage{url}

\usepackage{amssymb}
\usepackage{bm}
\usepackage{nicefrac}
\usepackage{booktabs}
\usepackage{array}
\usepackage{multirow}
\usepackage{threeparttable}
\usepackage{makecell}
\usepackage{color}
\usepackage[procnumbered,ruled,vlined,linesnumbered]{algorithm2e}
\usepackage{enumitem}
\usepackage{stfloats}

\def\mo#1{\| #1 \|}\def\kh#1{\left( #1 \right)}

\def\mo#1{\| #1 \|}
\def\len#1{\left| #1 \right|}

\def\calG{\mathcal{G}}

\newcommand{\rea}{\mathbb{R}}
\newtheorem{property}{Property}

\newcommand\yy{\boldsymbol{\mathit{y}}}
\renewcommand\vv{\boldsymbol{\mathit{v}}}

\renewcommand\ss{\boldsymbol{\mathit{s}}}
\newcommand\zz{\boldsymbol{\mathit{z}}}
\newcommand\xx{\boldsymbol{\mathit{x}}}
\newcommand\bb{\boldsymbol{\mathit{b}}}
\newcommand\ee{\boldsymbol{\mathit{e}}}
\newcommand\ff{\boldsymbol{\mathit{f}}}
\newcommand\qq{\boldsymbol{\mathit{q}}}

\newcommand\uu{\boldsymbol{\mathit{u}}}
\newcommand\vs{\boldsymbol{\mathit{s\emph{}}}}
\newcommand\xxi{\boldsymbol{\mathit{\xi}}}
\newcommand{\eps}{\epsilon}

\newcommand\OO{\boldsymbol{\mathit{O}}}
\newcommand\LL{\bm{\mathit{L}}}
\newcommand\AAA{\boldsymbol{\mathit{A}}}
\newcommand\BB{\boldsymbol{\mathit{B}}}

\newcommand\HH{\boldsymbol{\mathit{H}}}

\newcommand\KK{\boldsymbol{\mathit{K}}}
\newcommand\DD{\boldsymbol{\mathit{D}}}
\newcommand\EE{\boldsymbol{\mathit{E}}}

\newcommand\MM{\boldsymbol{\mathit{M}}}
\newcommand\QQ{\boldsymbol{\mathit{Q}}}
\newcommand\TT{\boldsymbol{\mathit{T}}}

\newcommand\II{\boldsymbol{\mathit{I}}}
\newcommand\WW{\boldsymbol{\mathit{W}}}

\newcommand\XX{\boldsymbol{\mathit{X}}}
\newcommand\YY{\boldsymbol{\mathit{Y}}}

\newtheorem{theorem}{Theorem}[section]
\newtheorem{proposition}[theorem]{Proposition}
\newtheorem{lemma}[theorem]{Lemma}

\newtheorem{definition}[theorem]{Definition}

\DontPrintSemicolon
\SetKw{KwAnd}{and}
\SetProcNameSty{textsc}
\SetFuncSty{textsc}
\SetKwInOut{Input}{Input\ \ \ \ }
\SetKwInOut{Output}{Output}

\def\kh#1{\left( #1 \right)}
\begin{document}
\title{Effects of Stubbornness on Opinion Dynamics}

\author{Wanyue Xu}
\affiliation{%
  \institution{Fudan University}
  \city{Shanghai}
  \country{China}}
\email{xuwy@fudan.edu.cn}
\author{Liwang Zhu}
\affiliation{%
  \institution{Fudan University}
  \city{Shanghai}
  \country{China}}
\email{19210240147@fudan.edu.cn}
\author{Jiale Guan}
\affiliation{%
  \institution{Fudan University}
  \city{Shanghai}
  \country{China}}
\email{16307130212@fudan.edu.cn}
\author{Zuobai zhang}
\affiliation{%
  \institution{Fudan University}
  \city{Shanghai}
  \country{China}}
\email{17300240035@fudan.edu.cn}
\author{Zhongzhi Zhang}
\authornote{Corresponding author}
\affiliation{%
  \institution{Fudan University}
  \city{Shanghai}
  \country{China}}
\email{zhangzz@fudan.edu.cn}

\renewcommand{\shortauthors}{Xu, et al.}

\begin{abstract}
As an important factor governing opinion dynamics, stubbornness strongly affects various aspects of opinion formation. However, a systematically theoretical study about the influences of heterogeneous stubbornness on opinion dynamics is still lacking. In this paper, we study a popular opinion model in the presence of inhomogeneous stubbornness. We show analytically that heterogeneous stubbornness has a great impact on  convergence time, expressed opinion of every node, and the overall expressed opinion. We provide an explanation of the expressed opinion in terms of stubbornness-dependent spanning diverging forests. We propose quantitative indicators to quantify some social concepts, including conflict, disagreement, and polarization by incorporating heterogeneous stubbornness, and develop a nearly linear time algorithm to approximate these quantities, which has a proved theoretical guarantee for the error of each quantity. To demonstrate the performance of our algorithm, we perform extensive experiments on a large set of real networks, which indicate that our algorithm is both efficient and effective, scalable to large networks with millions of nodes.
\end{abstract}

\begin{CCSXML}
<ccs2012>
<concept>
<concept_id>10003120.10003130.10003134.10003293</concept_id>
<concept_desc>Human-centered computing~Social network analysis</concept_desc>
<concept_significance>500</concept_significance>
</concept>
<concept>
<concept_id>10003033.10003083.10003094</concept_id>
<concept_desc>Networks~Network dynamics</concept_desc>
<concept_significance>500</concept_significance>
</concept>
<concept>
<concept_id>10002951.10003260.10003282.10003292</concept_id>
<concept_desc>Information systems~Social networks</concept_desc>
<concept_significance>500</concept_significance>
</concept>
</ccs2012>
\end{CCSXML}

\ccsdesc[500]{Human-centered computing~Social network analysis}
\ccsdesc[500]{Networks~Network dynamics}
\ccsdesc[500]{Information systems~Social networks}
\keywords{Opinion dynamics, social network, multi-agent system, polarization, disagreement, conflict, Laplacian solver}

\maketitle

\section{Introduction}
As a research subject of computational social science~\cite{HoWaAtetal21}, opinion dynamics has received considerable attention from the scientific community~\cite{AnDaPrRaYe20}, especially in recent years due to the explosive growth of social media and online social networks~\cite{Le20}. An important factor affecting opinion dynamics is the attributes of individuals, for example, stubbornness that measures the extent to which an individual sticks to his own initial opinion/belief or is willing to change his opinion in reaction to its neighbors' opinions. Thus, stubbornness indicates people's susceptibility to persuasion~\cite{AbKlPaTs18, AbChKlLiPaSoTs21}. In social psychology, a rich body of empirical work~\cite{AlPoMc06,DiKlLe92,EvOaSc92,StMo07, WaPoAr93} focused on this key element, unveiling some far-reaching implications of stubbornness in diverse aspects, including product marketing, public health campaigns, and political candidates. Some recent work studied persuasive technologies to change an individual's attitude or stubbornness~\cite{Fo02,IjDeMi06,KaMade09}, showing that individual's stubbornness are heterogeneous.

Perhaps the most important step for opinion dynamics study is to establish a mathematical model by capturing various fundamental factors affecting opinion formation. During the past decades, a large body of literature has been devoted to the modeling of opinion dynamics~\cite{PrTe17}, and several classic models have been developed, providing a deep understanding of opinion diffusion and formulation. Among these models, Friedkin-Johnsen (FJ) model~\cite{FrJo90} is one of the most popular models. As a significant extension of the Degroot model~\cite{De74}, the FJ model grasps some intricate social behaviors by considering French's ``theory of social power''~\cite{Fr56}. Very recently, by modifying the FJ model a study was initiated in~\cite{AbKlPaTs18, AbChKlLiPaSoTs21} to explore how to change individuals' stubbornness to optimize the overall opinion. However, there is still no theoretical studies that systematically uncover the influences of heterogeneous stubbornness on opinion dynamics, including convergence speed, overall opinion, and some resulting social phenomena such as polarization, disagreement, and conflict.  

In order to fill the aforementioned gap, in this paper, we study the FJ model for opinion dynamics on an undirected weighted graph, aiming at unveiling the role played by heterogeneous stubbornness in opinion dynamics. The main contributions of this paper are summarized below. 
\begin{itemize}[leftmargin=*]
\item We give an explanation of the expressed opinion in terms of spanning diverging forests of a corresponding defined directed graph. 

\item We show that heterogeneous stubbornness has a strong effect on various aspects of opinion dynamics, including convergence velocity, expressed opinion of every node, and the overall expressed opinion. 

\item We make an extension of some relevant quantities for the FJ model with  uniform stubbornness to the case with heterogeneous stubbornness, including conflict, disagreement, and polarization by incorporating heterogeneous stubbornness. We then develop an approximation algorithm to evaluate these quantities, which has a nearly linear time complexity and a theoretically guaranteed error. 

\item We perform extensive experiments on large datasets, which demonstrate that our algorithm achieves both good efficiency and effectiveness, being scalable to large networks with millions of nodes.
\end{itemize}

\section{Preliminaries}\label{preliminaries}

In this section, we briefly introduce some concepts about directed and undirected  graphs,  their related matrices, as well as some useful notations.

\subsection{Digraph and Related Concepts}

Let $\Gamma=(V(\Gamma),E(\Gamma), w)$ denote a weighted directed graph (network) abbreviated as digraph with  node set $V(\Gamma)$, arc (edge) set $E(\Gamma)$ and edge weight function $w : E(\Gamma) \to \rea_{+}$. Suppose $n=|V(\Gamma)|$ and $V(\Gamma)=\{v_1,v_2,...,v_n\}$. In what follows, $v_i$ and $i$ are used interchangeably to denote node $v_i$ in the case without inducing any confusion. Let $i\to j$ denote an arc in $ E(\Gamma)$ pointing to $j$ from $i$. In $\Gamma$, a loop is an arc having identical end nodes, and an isolated node is a node having no arcs pointing to or coming from it. For two nodes $v_1$ and $v_k$ in digraph $\Gamma$, a path from $v_1$ to $v_k$  is an alternating sequence of nodes and arcs $v_1, e_1, v_2,..., e_{k-1}, v_k$, in which nodes are distinct and every arc $e_i$ is $v_{i}\to v_{i+1}$. A circuit is a path plus an arc from the ending node to the starting node. Digraph $\Gamma$ is called (strongly) connected if there exists at least a path for any pair of nodes. $\Gamma$ is called weakly connected if it is connected when one replaces any arc $i\to j$ by two arcs $i\to j$ and $j\to i$ in opposite directions. A directed tree is a weakly connected digraph with no loops and cycles. An isolated node is considered as a tree. A directed forest is a particular digraph that is a disjoint union of directed trees.

For  a weighted digraph $\Gamma$, many connection properties are encoded in its weighted adjacency matrix $\AAA=(w_{ij})_{n \times n}$, whose entry $w_{ij}$ at row $i$ and column $j$ represents the weight of arc $i\to j $. For each arc $i\to j \in E$ in $\Gamma$, its weight $w_{i j}$ is strictly nonnegative, representing the influence of $j$ on $i$. If there is no arcs from $i$ to $j$, $w_{i j}=0$. For a node $i$, its weighted in-degree $d^+_i$ is defined by  $d^+_i=\sum_{j=1}^n w_{ji}$, and its weighted out-degree $d^-_i$ is $d^-_i=\sum_{j=1}^n w_{ij}$. Let $\Theta_i$ be the set of nodes to which node $i$ points. Then, $d^-_i$ is also equal to $\sum_{j\in \Theta_i } w_{ij}$. In the sequel, we use $d_i$ to represent the out-degree $d_i^-$. The diagonal degree matrix of $\Gamma$ is $\DD= {\rm diag}(d_1, d_2, \ldots, d_n)$,  with the $i$-th diagonal entry being $d_i$. The Laplacian matrix of $\Gamma$ is $\LL=(l_{ij})_{n \times n}=\DD-\AAA$. Let $\mathbf{1}$ and $\mathbf{0}$ be the two  $n$-dimensional vectors with all entries being ones and zeros, respectively.  Let $\OO$ represent the zero matrix of appropriate dimension. By definition, the sum of all entries in each row of $\LL$ equals $0$, implying $\LL\mathbf{1}=\mathbf{0}$. Let $\II$ be the identity matrix. Then matrix $\II+\LL$ is invertible, with each entry of matrix $\left(\II+\LL\right)^{-1}$ being positive~\cite{ChSh97,ChSh98}.

\subsection{Undirected Graph and Its Laplacian Matrix}

In a weighted digraph $\Gamma$, if for any arc $i\to j $ with weight $w_{ij}$, the arc $ j \to i $ exists and has weight $w_{ji}$ equal to $w_{ij}$, then $\Gamma$ is called an undirected weighted graph, denoted by ${\calG}=  (V({\calG}), E({\calG}), w)$. Arcs $ j \to i $ and  $ i \to j $ together form an undirected edge in $\calG$. Let $w_{\rm max}$ and $w_{\rm min}$ denote, respectively, the maximum and minimum weight among all edges in $E(\calG)$ of graph $\calG$. By definition, for an undirected graph $\calG$,  $w_{ij}= w_{ji}$ holds for any pair of nodes $i$ and $j$, and  $d^+_i= d^-_i$ holds for any node $i$. Moreover, both the weighted adjacency matrix $\AAA$ and Laplacian matrix $\LL$ of $\calG$ are symmetric, with the latter obeying $\LL\mathbf{1}=\mathbf{0}$.

For an undirected weighted graph ${\calG}=  (V({\calG}), E({\calG}), w)$  with $n$ nodes and $m$ edges, there is an alternative construction for its  Laplacian matrix $\LL$. Let $\BB \in \mathbb{R}^{| E({\calG})| \times |V({\calG})|}$ represent the signed edge-node incidence matrix of  $\calG$. It is an $m\times n$ matrix, whose entry $b_{ev}$ with  $e\in E(\calG)$ and $ v\in V(\calG)$ is defined as follows: $b_{ev}=1$ (resp. $b_{ev}=-1$) if node $v$ is the head (resp. tail) of edge $e$,  and $b_{ev}=0$ otherwise. Let $\ee_i$ denote the $i$-th standard basis vector. In matrix $\BB$, the row vector $\bb_{e}$ corresponding to $e$ linking two nodes $i$ and $j$ can be written as $\bb_{e}=\ee_i-\ee_j$.  We use  $\WW$ to denote  the $m\times m$ diagonal weight matrix ${\rm diag}(w_1, w_2, \ldots, w_m)$, the $e$-th diagonal entry of which is equal to the weight of edge $w_e$. Then the Laplacian matrix $\LL$ of $\calG$ can also be written as $\LL = \BB^\top \WW \BB=\sum\nolimits_{e\in E(\calG)} w_e \bb_e \bb_e^\top $, implying that $\LL$ is symmetric and positive semidefinite.

The positive semidefiniteness of Laplacian matrix $\LL$ for graph $\calG$ means that   all  eigenvalues of $\LL$  are non-negative. Moreover, when graph $\calG$ is connected, its  Laplacian matrix $\LL$ has a unique zero eigenvalue. Let $\lambda_1 \ge  \lambda_2 \ge \ldots \ge\lambda_{n-1} \ge \lambda_n=0$ be the $n$ eigenvalues of  $\LL$ for connected $\calG$. Then,  $\lambda_{\max}= \lambda_{1}\leq n\, w_{\rm max}$~\cite{SpSr11}, and $\lambda_{\min}=\lambda_{n-1}\geq w _{\min}/ n^2 $~\cite{LiSc18}. Moreover, it has been shown that for any diagonal matrix $\KK={\rm diag}(k_1, k_2, \ldots, k_n)$ with $k_i>0$ ($i=1,2, \ldots, n$), matrix $\KK +\LL$ is invertible, and every entry of matrix $\left(\KK +\LL\right)^{-1}$ is positive~\cite{LiPeShYiZh19}.

\subsection{Some Useful Notations}

In this subsection, we introduce some notations and their properties, which are useful for our description and proofs. For two nonnegative scalars $a\geq 0$ and $b\geq 0$, $a$ is called an $\eps$-approximation of $b$ and denoted by $a \approx_{\eps} b$, if $(1-\eps) a \leq b \leq (1+\eps) a$ for $0\leq \eps \leq1/2$. The $\eps$-approximation has the following basic properties. For nonnegative scalars $a$, $b$, $c$, and $d $, if $a \approx_\eps b$ and $c \approx_\eps d$, then $a + c \approx_\eps b + d$. For two matrices $\XX$ and $\YY$, if $\YY - \XX$ is positive semidefinite we write $\XX \preceq \YY$. In other words, for every real vector $\xx$, the relation $\xx^\top \XX \xx \leq \xx^\top \YY \xx$ holds. For a diagonal matrix $\KK={\rm diag}(k_1, k_2, \ldots, k_n)$, we use $k_{\rm max}$ and $k_{\rm min}$ to denote, respectively, the maximum and minimum values of its diagonal entries. Then, for the Laplacian matrix $\LL$ of a connected undirected graph $\calG$, the following expressions hold true: $\KK \preceq  \LL+\KK$,  $\LL \preceq \LL+\KK$,  $ \LL+\KK \preceq (k_{\rm max} +nw_{\rm max}) \II$, and $ \frac{1}{k_{\rm max} +n\,w_{\rm max}}\LL \preceq \II$.

\section{FJ Model with Heterogeneous Stubbornness }\label{model}


In this section, we study the FJ model~\cite{FrJo90}  of  opinion dynamics on  an undirected weighted graph ${\calG}=  (V({\calG}), E({\calG}))$ with  heterogeneous stubborn agents.  We first introduce the FJ model on a connected digraph $\Gamma$ with homogeneous stubbornness.

\subsection{ FJ  Model on a Digraph with Unit Stubbornness}

The FJ model is a classic model for opinion formulation~\cite{FrJo90}. For the FJ model on a digraph $\Gamma=(V(\Gamma),E(\Gamma), w)$, each node $i \in V$ has two different opinions,  internal (or innate) opinion $s_i$ and expressed opinion $z_i$. Moreover, every individual displays some stubbornness $k_i$ against changing its opinion. The internal opinion $s_i$  is assumed to remain  constant in the interval $[-1,1]$,   while the  expressed opinion $z_i(t+1)$ at time $t+1$ evolves as a weighted average of  its internal opinion $s_i$ and the expressed opinions of its neighbors at time $t$ as
\begin{equation}\label{FJmodel}
z_i(t+1)=\frac{ s_{i}+\sum_{j \in \Theta_i} w_{i j} z_{j}(t)}{1+\sum_{j \in \Theta_i} w_{i j}}.
\end{equation}
Note that during the above updating process of expressed opinions, the stubbornness $k_i$ of every node $i$ is 1, which is a common assumption in the literature. After evolution of enough time, the expressed opinions of nodes converge. Let $\zz=(z_1, z_2,\ldots,z_n)^\top$ be the vector of expressed opinions, where $z_i$ is the  expressed opinion of node $i$ at equilibrium. It was shown~\cite{BiKlOr15} that the unique equilibrium opinion vector $\zz$  is
\begin{equation}\label{FJmodel02}
\zz =(\II+\LL)^{-1} \vs\,.
\end{equation}

In~\cite{MaTeTs17}, $\mathbf{\Omega}=(\II+\LL)^{-1}=(\omega_{ij})_{n \times n}$ is called the fundamental matrix of the FJ opinion dynamics model. It has been proved that matrix $\mathbf{\Omega}$ is row stochastic satisfying  $\mathbf{\Omega} \textbf{1}=\textbf{1}$~\cite{ChSh97,ChSh98}. Then,~\eqref{FJmodel02} indicates that for any node $i \in V$, its expressed opinion $z_i$ at equilibrium is  $z_i=\sum^n_{j=1}  \omega_{ij}s_j$, which is a convex combination of the internal opinions of all nodes including $i$ itself. Since for $i \neq j$, $\omega_{ii}>\omega_{ij}$ holds~\cite{ChSh97,ChSh98}, $z_i$ has a bias towards the internal opinion $s_i$. That is, at equilibrium any node is inclined to keep its internal opinion of its own rather than others.

\subsection{ FJ  Model on a Graph with Heterogeneous Stubbornness}


Many research works~\cite{Fo02,IjDeMi06,KaMade09}  show that the stubbornness of  individuals are heterogeneous. Here we study FJ  model on an undirected weighted graph ${\calG}=  (V({\calG}), E({\calG}), w)$  with inhomogeneous stubbornness, with an aim to explore the influences of  inhomogeneous stubbornness on various aspects of opinion dynamics.

\subsubsection{Evolution Rule}

Different from the~\eqref{FJmodel}, for the FJ model on graph ${\calG}$,  each node $i$ has different inclination to his initial opinion. At time $t+1$, the expressed opinion of node $i$ updates in the following way:
\begin{equation}\label{SA02}
    z_i(t+1)=\frac{k_i s_i+\sum_{j\in\Theta_{i}}w_{ij}z_j(t)}{k_i+\sum_{j\in\Theta_{i}}w_{ij}},
\end{equation}
where the stubbornness coefficient $k_i\ge 0$ measures the  stubbornness of node $i$ to his initial opinion. We consider the case that the  stubbornness coefficients of all nodes are greater than 0 obeying $k_1k_2\ldots k_n\ne 0$. In what follows, we call  ${\KK} = {\rm diag}(k_1, k_2, \ldots, k_n)$ the stubbornness matrix of  $\calG$.

\subsubsection{Existence and Expression of Equilibrium Opinions}

We now determine the  expressed opinions at equilibrium. We first express~\eqref{SA02} in matrix form as 
\begin{equation}\label{SA03}
    \zz(t+1)=\QQ\AAA\zz(t)+\QQ\KK \vs,
\end{equation}
where $\QQ= {\rm diag}(q_1, q_2, \ldots, q_m)$  is an $n \times n$ diagonal matrix with the $i$-th diagonal entry $q_i$ being
$q_i={1}/({k_i+\sum_{j\in\Theta_{i}}w_{ij}})$.

By iterating~\eqref{SA03}, the expressed opinion vector $\zz(t)$ at time $t$ is obtained to be 
\begin{equation}\label{SA05}
    \zz(t)=(\QQ\AAA)^t\zz(0)+\sum_{s=0}^{t-1}(\QQ\AAA)^s\QQ\KK\zz(0).
\end{equation}
Note that  the sum of the $i$-th row for matrix $\QQ\AAA$ is $ \frac{\sum_{j\in\Theta_{i}}w_{ij}}{k_i+\sum_{j\in\Theta_{i}}w_{ij}}<1$ for any $i\in V$,  implying
that matrix $\QQ\AAA$ is row sub-stochastic. According to Perron-Forbenius theorem~\cite{Ma00}, the absolute values of all  eigenvalues of matrix $\QQ\AAA$ are less than 1. Thus, we have
$\lim_{t\rightarrow \infty}(\QQ\AAA)^t=\OO$. 
Then, we can conclude the existence of  vector $\zz$ for equilibrium expressed opinions, which reads 
\begin{align}\label{noisesteady}
\zz    =\lim_{t\rightarrow \infty}\zz(t) =(\LL+\KK)^{-1}\KK\zz(0)
     =(\LL+\KK)^{-1}\KK \vs\,.
	\end{align}

Let $\mathbf{\Phi}=(\phi_{ij})_{n \times n}\triangleq(\LL+\KK)^{-1}\KK$ be the fundamental matrix of this opinion model. Due to  $\LL\textbf{1}=\textbf{0}$, relation $(\LL+\KK)\textbf{1}=\KK\textbf{1}$ holds. On the other hand, matrix $\LL+\KK$ is invertible, then $(\LL+\KK)^{-1}\KK\textbf{1}=\textbf{1}$, implying that $\mathbf{\Phi}=(\LL+\KK)^{-1}\KK$ is row-stochastic. Equation~\eqref{noisesteady} shows that, for each node $i \in V$ its expressed opinion $z_i$ is  $z_i=\sum^n_{j=1}  \phi_{ij}s_j$, which is a convex combination of the internal opinions for all nodes in $V$. Since  $\mathbf{\Phi}$ is  determined by network structure, edge weighs, and the levels of stubbornness of nodes,  its is the same with the equilibrium expressed opinion for every node when the internal opinions are fixed.

\subsubsection{Properties of  Opinions}

Using the properties of the fundamental matrix $(\LL+\KK)^{-1}\KK$, we give some relations between internal opinion vector $\vs$ and expressed opinion vector $\zz$.
\begin{property}\label{Mean}
For a graph ${\calG}=  (V({\calG}), E({\calG}), w)$, if the weighted sum of internal opinions $\textbf{1}^{\top}\KK\vs$ is 0, then the weighted sum of expressed opinions  $\textbf{1}^{\top}\KK\zz$ in equilibrium is also 0.
\end{property}
\begin{property}\label{Offset}
For  opinion dynamics on a graph ${\calG}=  (V({\calG}), E({\calG}), w)$, if the internal opinion vector $\vs$ is changed to  $\vs'=\vs+c\textbf{1}$ with $c$ being a constant, then the corresponding equilibrium  opinion vector is changed from $\zz$ to $\zz'=\zz+c\textbf{1}$.
\end{property}
Thus, without loss of generality, in the sequel  we assume that the internal opinions obey $\textbf{1}^{\top}\KK\vs=0$, which means $\textbf{1}^{\top}\KK\zz=0$. Otherwise, if $\textbf{1}^{\top}\KK\vs \neq 0$, we change $\vs$ to
$\vs'=\vs-\frac{\textbf{1}^{\top}\KK\vs}{n}\textbf{1}$. By Property~\ref{Offset},  the expressed opinion vector $\zz$ is changed to $\zz'=\zz-\frac{\textbf{1}^{\top}\KK\ss}{n}\textbf{1}$.
\subsection{Interpretation in Terms of Spanning Diverging Forests}
Here we provide an explanation for the expressed opinions $\zz=(\LL+\KK)^{-1}\KK \vs$. For this purpose, we 
map ${\calG}=  (V({\calG}), E({\calG}), w)$ with  stubbornness matrix $\KK$  into a weighted digraph $\mathcal{G}'$.  The node set  $V'$ of digraph  $\mathcal{G}'$ is the same as the node set $V$ of graph ${\calG}$. For each edge $e$ with weight $w_{ij}=w_{ji}$  in ${\calG}$ linking two end nodes $i$ and $j$, we create two arcs $i\to j$ and $j \to i$ in $\mathcal{G}'$, the weights of which are $w_{ij}/k_i$ and  $w_{ji}/k_j$, respectively. Recall that the fundamental matrix $\mathbf{\Phi}$ for opinion dynamics on graph ${\calG}=  (V({\calG}), E({\calG}), w)$  with  stubbornness matrix $\KK$ can be written as $\mathbf{\Phi}=(\LL+\KK)^{-1}\KK=(\II+\KK^{-1}\LL)^{-1}$. Since $\KK^{-1}\LL$ is the Laplacian matrix of  digraph $\mathcal{G}'$, according to~\eqref{FJmodel02}, it is easy to verify that $\mathbf{\Phi}$ is in fact the fundamental matrix of opinion dynamics on $\mathcal{G}'$, when the stubbornness of all nodes is 1. Below, we give an interpretation of expressed opinions $\zz=(\LL+\KK)^{-1}\KK \vs$ in terms of spanning converging forests of digraph $\mathcal{G}'$.

For a digraph $\mathcal{G}'$ with node set $V'$ and arc set $E'$, a subdigraph of $\mathcal{G}'$ is a digraph, whose node set and arc set are subsets of $V'$ and $E'$, respectively, but the node set $V'$ must comprise the head and tail of each arc in set $E'$. A spanning subdigraph of $\mathcal{G}'$ is a subdigraph with node set $V'$. A spanning diverging tree or an out-tree is a weakly connected digraph in which one node, called the root, has in-degree $0$ and the remaining nodes have in-degree $1$. An isolated node is considered as a diverging tree with the root being itself. A spanning diverging forest of $\mathcal{G}'$ is a spanning subdigraph of $\mathcal{G}'$, where every weakly connected component is a diverging tree. A spanning diverging forest is also called a spanning out-forest~\cite{AgCh01,ChAg02}.

For a subdigraph $\mathcal{\bar{G}}$ of digraph $\mathcal{G}'$, its weight $\varepsilon(\mathcal{\bar{G}})$ is defined as the product of the weights of all arcs in $\mathcal{\bar{G}}$. If $\mathcal{\bar{G}}$ has no arc, its  weight $\varepsilon(\mathcal{\bar{G}})$ is set to be $1$. For a nonempty set $S$ of subdigraphs, define its weight $\varepsilon(S)$  as $\varepsilon(S) =\sum_{\mathcal{\bar{G}} \in S}  \varepsilon(\mathcal{\bar{G}})$. If $S$ is empty, $\varepsilon(S)$ is defined to be zero~\cite{ChSh97,ChSh98}. For a digraph $\mathcal{G}'$, let $\Upsilon$ be the set of all its out-forests, and $\Upsilon_{i j}$ the set of out-forests with nodes $i$ and $j$ in the same out-tree rooted at node $i$. In~\cite{ChSh97,ChSh98}, the  diverging forest matrix of a digraph $\mathcal{G}'$ is proposed and defined  as  $\mathbf{\Phi}=(\phi_{ij})_{n \times n}=\left(\II+\KK^{-1}\LL\right)^{-1}$, where the $ij$-th entry $\phi_{ij}$ is  $\phi_{ij}=\varepsilon(\Upsilon_{i j})/ \varepsilon(\Upsilon)$. Obviously, the diverging forest matrix $\mathbf{\Phi}$ is exactly the fundamental matrix of opinion dynamics on $\mathcal{G}'$, when the stubbornness matrix is an identity matrix.

In this way, we have provided an interpretation  of the equilibrium  expressed opinion vector $\zz$ given in~\eqref{noisesteady} for a undirected graph $\mathcal{G}$, in terms of  the diverging  forest matrix of  a corresponding digraph $\mathcal{G}'$, where edge weights contain the  information of node stubbornness. The perspective of our interpretation is novel, which is different from previous ones, such as absorbing random walks~\cite{GiTeTs13}, game theory~\cite{BiKlOr15}, and electrical networks~\cite{GhSr14}.

\section{Effects of Heterogeneous Stubbornness on Opinion Dynamics}\label{effects}

In this section, we  give  insight on how the stubbornness affects the convergence velocity to  equilibrium opinions, overall opinion, and the  expressed opinion of every node.
\subsection{Convergence Time}
Equation~\eqref{noisesteady}  characterizes the equilibrium expressed opinions. This  equilibrium behavior is relevant only if it  converges in a reasonable time~\cite{El93}. Below study the impact of the level of stubbornness on  the convergence time. We first give a definition of  error vector for expressed opinion vector $\zz(t)$  at time $t$. For brevity, in the sequel, we use $\calG= (V,E,w)$ to represent ${\calG}=  (V({\calG}), E({\calG}), w)$.
\begin{definition}
For opinion dynamics on a graph  $\calG= (V,E,w)$, the error vector $\ee(t)$ of expressed opinion vector  $\zz(t)$ at time $t$
is the difference of $\zz(t)$ and the equilibrium opinion vector $\zz$:
$\ee(t)=\zz(t)-\zz$.
\end{definition}

Based on~\eqref{SA05},  $\ee(t)$ can  be expressed  as
\begin{equation}
    \ee(t)=(\QQ\AAA)^t\Big(\zz(0)-\sum_{s=0}^{\infty}(\QQ\AAA)^s\QQ\KK\zz(0)\Big),\nonumber
\end{equation}
which leads to the following recursive relation governing $\ee(t)$ and $\ee(t+1)$:
\begin{equation}\label{CT01}
    \ee(t+1) = \QQ\AAA \ee(t).
\end{equation}
In order to give a better analysis of the convergence time, we introduce a new  error vector $\ff(t)$ defined as $\ff(t)=\QQ^{-\frac 1 2}\ee(t)$, the $i$-th entry of which is $f_i(t)=\frac{\ee_i(t)}{\sqrt{d_i+k_i}}$.
Using~\eqref{CT01}, we obtain
\begin{equation}\label{CT02}
    \ff(t+1) = \QQ^{\frac 1 2}\AAA\QQ^{\frac 1 2} \ff(t).
\end{equation}

We now define the convergence time of opinion dynamics on graph  $\calG= (V,E,w)$ and give an upper bound for it.
\begin{definition}
For opinion dynamics on a graph  $\calG= (V,E,w)$, its convergence time $t(\varepsilon)$ is defined as
\begin{equation}
    t(\varepsilon)=\inf \{t\ge 0, |\ff(t)|\le \varepsilon\}.
\end{equation}
\end{definition}

Let $\rho_{{\rm max}}$ be the spectral radius of matrix $\QQ\AAA$. Since $\QQ\AAA$ is  sub-stochastic, $\rho_{{\rm max}}$ is always less than 1.
\begin{proposition}\label{convergence}
For opinion dynamics on a graph  $\calG= (V,E,w)$, its convergence time $t(\varepsilon)$ is not greater than $\log_{\rho_{{\rm max}}} \varepsilon - \log_{\rho_{{\rm max}}} |\ff(0)|$.
\end{proposition}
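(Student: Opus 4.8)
The plan is to exploit the symmetry that is hidden in the rescaled recursion~\eqref{CT02}. The iteration matrix $\QQ\AAA$ governing the raw error $\ee(t)$ is not symmetric, so its spectral radius need not control the Euclidean growth of $|\ee(t)|$ directly; this is precisely why the substitution $\ff(t)=\QQ^{-\frac{1}{2}}\ee(t)$ was introduced. Writing $\MM=\QQ^{\frac{1}{2}}\AAA\QQ^{\frac{1}{2}}$, equation~\eqref{CT02} becomes $\ff(t+1)=\MM\ff(t)$. Since $\calG$ is undirected, $\AAA$ is symmetric, and since each stubbornness coefficient is positive, $\QQ^{\frac{1}{2}}$ is a real diagonal matrix with strictly positive entries $q_i^{1/2}=1/\sqrt{k_i+d_i}>0$. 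Consequently $\MM^{\top}=\QQ^{\frac{1}{2}}\AAA^{\top}\QQ^{\frac{1}{2}}=\MM$, so $\MM$ is symmetric.

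First I would record that $\MM$ is similar to $\QQ\AAA$: a direct computation gives $\QQ^{\frac{1}{2}}\MM\QQ^{-\frac{1}{2}}=\QQ\AAA$, so $\MM$ and $\QQ\AAA$ have the same eigenvalues and hence the same spectral radius $\rho_{{\rm max}}$. Because $\MM$ is symmetric, the operator norm $\|\MM\|$ induced by the Euclidean vector norm $|\cdot|$ coincides with the spectral radius, yielding $\|\MM\|=\rho_{{\rm max}}$.

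Applying this norm bound to the recursion then gives $|\ff(t+1)|=|\MM\ff(t)|\le\|\MM\|\,|\ff(t)|=\rho_{{\rm max}}|\ff(t)|$, and a one-line induction produces $|\ff(t)|\le\rho_{{\rm max}}^{\,t}|\ff(0)|$. To certify $|\ff(t)|\le\varepsilon$ it therefore suffices to require $\rho_{{\rm max}}^{\,t}|\ff(0)|\le\varepsilon$. Taking logarithms and dividing by $\log\rho_{{\rm max}}$, which is negative since $\rho_{{\rm max}}<1$ so the inequality reverses, this condition reads $t\ge\log_{\rho_{{\rm max}}}\varepsilon-\log_{\rho_{{\rm max}}}|\ff(0)|$. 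Every such $t$ thus belongs to the set defining $t(\varepsilon)$, and passing to the infimum gives the claimed upper bound.

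The step that must be handled carefully is the norm identity $\|\MM\|=\rho_{{\rm max}}$, which is exactly where the undirectedness of $\calG$ (symmetry of $\AAA$) and the positivity of the stubbornness coefficients (so that $\QQ^{\frac{1}{2}}$ is a genuine real diagonal factor) are used: without this symmetrization one would only have $\rho_{{\rm max}}\le\|\QQ\AAA\|$, possibly with strict inequality, and the clean geometric decay rate $\rho_{{\rm max}}^{\,t}$ would be lost. A minor caveat is that $t$ is discrete, so the sharpest integer threshold is $\lceil\log_{\rho_{{\rm max}}}\varepsilon-\log_{\rho_{{\rm max}}}|\ff(0)|\rceil$; the stated real-valued quantity is an upper bound for the first integer time at which the error is guaranteed to fall below $\varepsilon$.
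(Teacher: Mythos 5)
Your proof is correct and follows essentially the same route as the paper: both arguments hinge on the similarity $\QQ^{\frac 1 2}\AAA\QQ^{\frac 1 2}\sim\QQ\AAA$ and the symmetry of the conjugated matrix, which forces the Euclidean norm of the rescaled error to contract by a factor $\rho_{\rm max}$ at each step. The only difference is presentational --- you invoke the identity $\|\MM\|=\rho_{\rm max}$ for symmetric matrices abstractly, while the paper proves exactly this fact inline by expanding $\ff(t)$ in an orthogonal eigenbasis of $\QQ^{\frac 1 2}\AAA\QQ^{\frac 1 2}$.
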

\begin{proof}
	Since $\QQ^{\frac 1 2}\AAA\QQ^{\frac 1 2}=\QQ^{-\frac 1 2}(\QQ\AAA)\QQ^{\frac 1 2}$, $\QQ^{\frac 1 2}\AAA\QQ^{\frac 1 2}$ and $\QQ\AAA$ has the same set of eigenvalues.    Let $1>\rho_{{\rm max}}= |\rho_1| \ge |\rho_2| \ge \ldots \ge |\rho_n|$ be  the eigenvalues of the symmetric matrix $\QQ^{\frac 1 2}\AAA\QQ^{\frac 1 2}$, and let $\xxi_1$, $\xxi_2$,$\ldots$,$\xxi_n$ be their corresponding orthogonal eigenvectors. Then $\ff(t)$ can be recast  as
	\begin{equation}
	\ff(t)=\sum_{i=1}^n \frac{\xxi_i^\top \ff(t)}{\xxi_i^\top \xxi_i}\xxi_i.\label{ft}
	\end{equation}
	Based on~\eqref{CT02} and~\eqref{ft}, we obtain the recursive relation between $\ff(t+1)$ and $\ff(t)$ as
	\begin{equation}
	\ff(t+1)=\QQ^{\frac 1 2}\AAA\QQ^{\frac 1 2} \ff(t)=\sum_{i=1}^n \frac{\xxi_i^\top \ff(t)}{\xxi_i^\top \xxi_i}\rho_i\xxi_i.
	\end{equation}
	Then, we have
	\begin{align*}
	|\ff(t+1)|^2&=\sum_{i=1}^n \left(\frac{\xxi_i^\top \ff(t)}{\xxi_i^\top \xxi_i}\right)^2 \rho_i^2\xxi_i^2\le \rho^2_{{\rm max}}\sum_{i=1}^n \left(\frac{\xxi_i^\top \ff(t)}{\xxi_i^\top \xxi_i}\right)^2\xxi_i^2\\
	&= \rho^2_{{\rm max}}  |\ff(t)|^2.
	\end{align*}
	Thus, $|\ff(t+1)|\le \rho_{{\rm max}} |\ff(t)|$ and $|\ff(t)|\le \rho^t_{{\rm max}} |\ff(0)|$, which means that $|\ff(t)|$ decreases exponentially with $|\ff(t)|$. When $|\ff(t)|$ drops below $\varepsilon$ with $ 0 \le \varepsilon \le |\ff(0)| $, the convergence time is not great than $\log_{\rho_{{\rm max}}} \varepsilon - \log_{\rho_{{\rm max}}} |\ff(0)|$.
\end{proof}

Proposition~\ref{convergence} indicates that  the convergence time of opinion dynamics on  graph  $\calG= (V,E,w)$ is related to the leading eigenvalue $\rho_{{\rm max}}$ of matrix  $\QQ\AAA$, with smaller $\rho_{{\rm max}}$ corresponding to faster convergence.

The next theorem shows the change of the largest eigenvalue  $\rho_{{\rm max}}$ for matrix  $\QQ\AAA$ with respect to the stubbornness $k_i$ of node $i$.  
\begin{theorem}\label{ThmRho}
For opinion dynamics on an undirected graph  $\calG$ with stubbornness matrix $\KK$, the largest eigenvalue  $\rho_{{\rm max}}$ of matrix  $\QQ\AAA$ is a decreasing function of stubbornness $k_i$ of node $i$, if the stubbornness of all other nodes is fixed.
\end{theorem}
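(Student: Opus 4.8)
The plan is to express $\rho_{{\rm max}}$ through a variational (Rayleigh-quotient) characterization in which the dependence on $k_i$ is isolated in a single, monotone place. First I would recall from the proof of Proposition~\ref{convergence} that $\QQ\AAA$ shares its spectrum with the symmetric matrix $\QQ^{\frac12}\AAA\QQ^{\frac12}$, so $\rho_{{\rm max}}$ is its largest eigenvalue. Writing $\QQ^{-1}={\rm diag}(k_1+d_1,\ldots,k_n+d_n)$, which is symmetric positive definite, and substituting $\yy=\QQ^{\frac12}\xx$ into the Rayleigh quotient of $\QQ^{\frac12}\AAA\QQ^{\frac12}$, I would obtain the generalized Rayleigh quotient
\begin{equation}
\rho_{{\rm max}}=\max_{\yy\ne\mathbf{0}}\frac{\yy^\top\AAA\,\yy}{\yy^\top\QQ^{-1}\yy},\nonumber
\end{equation}
valid because $\AAA$ is symmetric and $\QQ^{-1}$ is symmetric positive definite. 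The crucial observation is that $\AAA$ does not depend on $k_i$ at all, and $k_i$ enters only through the single diagonal entry $k_i+d_i$ of $\QQ^{-1}$.

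Next I would exploit this monotone, rank-one dependence. Fix all $k_j$ with $j\ne i$ and take $k_i<k_i'$, with associated diagonal matrices $\QQ^{-1}$ and $(\QQ')^{-1}=\QQ^{-1}+(k_i'-k_i)\ee_i\ee_i^\top$, so that $(\QQ')^{-1}\succeq\QQ^{-1}$. For every $\yy$ the denominator can only grow, $\yy^\top(\QQ')^{-1}\yy\ge\yy^\top\QQ^{-1}\yy>0$, while the numerator is unchanged. Hence the ratio at the larger stubbornness never exceeds the ratio at the smaller one whenever the numerator is nonnegative, and it is itself nonpositive otherwise. Since $\rho_{{\rm max}}>0$ (because $\QQ^{\frac12}\AAA\QQ^{\frac12}$ is a nonzero nonnegative matrix), taking the maximum over $\yy$ already yields the weak inequality $\rho_{{\rm max}}(k_i')\le\rho_{{\rm max}}(k_i)$.

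To upgrade this to strict monotonicity I would invoke the Perron--Frobenius theorem. The matrix $\QQ'\AAA$ is nonnegative and, because $\calG$ is connected, irreducible; hence its spectral radius $\rho_{{\rm max}}(k_i')$ is a simple eigenvalue whose right eigenvector $\yy'$ is strictly positive, and this $\yy'$ is precisely the maximizer of the generalized Rayleigh quotient at $k_i'$, so $y_i'>0$. Then
\begin{equation}
(\yy')^\top(\QQ')^{-1}\yy'=(\yy')^\top\QQ^{-1}\yy'+(k_i'-k_i)(y_i')^2>(\yy')^\top\QQ^{-1}\yy',\nonumber
\end{equation}
and since the numerator $(\yy')^\top\AAA\,\yy'=\rho_{{\rm max}}(k_i')\,(\yy')^\top(\QQ')^{-1}\yy'$ is strictly positive, I conclude
\begin{equation}
\rho_{{\rm max}}(k_i')=\frac{(\yy')^\top\AAA\,\yy'}{(\yy')^\top(\QQ')^{-1}\yy'}<\frac{(\yy')^\top\AAA\,\yy'}{(\yy')^\top\QQ^{-1}\yy'}\le\rho_{{\rm max}}(k_i),\nonumber
\end{equation}
which is the claimed strict decrease.

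The main obstacle is the strict inequality rather than the monotone one: the weak bound follows purely from $(\QQ')^{-1}\succeq\QQ^{-1}$, but strictness requires that the $i$-th coordinate of the optimal eigenvector be genuinely nonzero, which is exactly what Perron--Frobenius positivity (guaranteed by connectivity of $\calG$) supplies. The care in the argument therefore lies in tying the strict growth of the denominator to the positivity of the Perron vector. A differentiation-based alternative via Hellmann--Feynman perturbation of the simple eigenvalue $\rho_{{\rm max}}$ would also work, but the $k_i$-dependence of $\QQ^{\frac12}\AAA\QQ^{\frac12}$ is more cumbersome to differentiate than the clean rank-one dependence of $\QQ^{-1}$, so I would prefer the variational route above.
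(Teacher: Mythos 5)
Your proof is correct, and while it shares the paper's underlying skeleton --- evaluate a variational quantity at the Perron eigenvector of the perturbed matrix, use strict positivity of that eigenvector (hence connectivity/irreducibility) to get a strict inequality, then bound by the maximum at the original parameter --- your execution is genuinely different and, in fact, sounder. The paper argues directly on the non-symmetric matrix $\QQ\AAA$: with $\uu,\vv$ the unit Perron vectors of $\QQ\AAA$ and $\QQ'\AAA$, it writes $\rho_{\rm max}=\uu^\top\QQ\AAA\,\uu\ \geq\ \vv^\top\QQ\AAA\,\vv\ >\ \vv^\top\QQ'\AAA\,\vv=\rho'_{\rm max}$. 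The middle inequality there is a Rayleigh-quotient maximality claim in the Euclidean inner product, which is not automatic for a non-symmetric matrix: $\max_{\|\xx\|=1}\xx^\top\QQ\AAA\,\xx$ equals the top eigenvalue of the symmetric part $\frac{1}{2}(\QQ\AAA+\AAA\QQ)$, which can strictly exceed $\rho_{\rm max}(\QQ\AAA)$, and likewise $\vv^\top\QQ'\AAA\,\vv=\rho'_{\rm max}$ needs $\vv$ to be normalized in the right geometry. Your generalized Rayleigh quotient $\rho_{\rm max}=\max_{\yy\ne\mathbf{0}}\yy^\top\AAA\,\yy/\yy^\top\QQ^{-1}\yy$ is exactly the repair: it is the Rayleigh quotient of $\QQ\AAA$ in the $\QQ^{-1}$-weighted inner product (equivalently, of the symmetric $\QQ^{1/2}\AAA\QQ^{1/2}$ after the substitution $\yy=\QQ^{1/2}\xx$), in which the maximality claim is legitimate; it also isolates the $k_i$-dependence as a rank-one increase $(\QQ')^{-1}=\QQ^{-1}+(k_i'-k_i)\ee_i\ee_i^\top$ of the denominator, so weak monotonicity is immediate and strictness reduces cleanly to $y_i'>0$ for the Perron vector $\yy'$ --- the precise analogue of the paper's (valid) strict step $\vv^\top\QQ\AAA\,\vv>\vv^\top\QQ'\AAA\,\vv$, which relies on $\vv>0$ and row $i$ of $\AAA$ being nonzero. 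In short: same Perron--Frobenius-plus-variational idea, but your symmetrized formulation makes the variational step rigorous where the paper's write-up glosses over the non-symmetry of $\QQ\AAA$, and it makes explicit that connectivity of $\calG$ is what supplies simplicity and positivity of the Perron vector.
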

\begin{proof}
	Let $\uu$ be the unit eigenvector of matrix  $\QQ\AAA$ corresponding to the largest eigenvalue  $\rho_{{\rm max}}$. If we increase the stubbornness $k_i$ of  node $i$ to $k_i'$, while fixing the values of the stubbornness of other nodes, we  use   $\rho'_{{\rm max}}$ and $\vv$ to represent, respectively, the  largest eigenvalue  and its corresponding unit eigenvector for the resulting matrix $\QQ' \AAA$. By Perron-Frobenius theorem~\cite{Ma00}, all elements in $\uu$ and $\vv$ are positive. Then, we have
	\begin{equation}
	\rho_{{\rm max}}=\uu^\top \QQ\AAA \uu \geq \vv^\top \QQ\AAA \vv > \vv^\top  \QQ' \AAA \vv =\rho'_{{\rm max}},\nonumber
	\end{equation}
	which finishes the proof.
\end{proof}

Thus, increasing the stubbornness value of a node can accelerate the convergence of  opinion dynamics. On the contrary,  reducing the level of stubbornness of a node may deteriorate the convergence speed.
\subsection{Overall Expressed Opinion}
We continue to study  the influence of inhomogeneous stubbornness on the overall opinion, that is, the sum of expressed opinions of all nodes, which has received considerable recent attention~\cite{GiTeTs13,AbKlPaTs18,ZhZh21}.

It was shown in~\cite{GiTeTs13} that for the FJ model of opinion dynamics on an undirected graph with each node having a unit stubbornness, the overall opinion is equal to the total internal opinion, that is $\sum^n_{i=1}  s_i=\sum^n_{i=1}z_i $, although the equilibrium  expressed opinion for an individual node may differ from its  internal  opinion. Moreover, this conservation law does not rely on the network structure and the distribution of edge weights. This conservation law can also be understood as follows. Equation~\eqref{noisesteady} shows  when $\KK=\II$, the fundamental matrix is reduced to $\mathbf{\Phi}=(\II+\LL)^{-1}$, which is symmetric and doubly stochastic~\cite{Me97} obeying $\sum_{j=1}^{n}\phi_{ij}=1$ for $i=1,2,\ldots,n$, and $\sum_{i=1}^{n}\phi_{ij}=1$ for $j=1,2,\ldots,n$. Then, $\sum^n_{i=1}z_i =\sum^n_{i=1} \sum_{j=1}^n \phi_{ij} s_j = \sum^n_{j=1}\sum^n_{i=1} \phi_{ij} s_j  =\sum^n_{i=1}  s_i $. Actually, for any  $\KK=k \II$ with $k$ being a positive constant,  $\sum^n_{i=1}  s_i=\sum^n_{i=1}z_i $ always holds.

In the case that the stubbornness matrix $\KK$ is not a scalar matrix, the opinion dynamics on undirected graph $\mathcal{G}$ is  equivalent to opinion dynamics on a digraph $\mathcal{G}'$ with stubbornness matrix being $\II$. Note that for any pair of arcs $i\to j$ and $j \to i$ in digraph $\mathcal{G}'$, their weights are usually unequal, with the exception of $k_i= k_j$. Thus, the Laplacian matrix $\KK^{-1}\LL $ of digraph $\mathcal{G}'$ is asymmetric, it is the same with the diverging forest matrix $\mathbf{\Omega}=(\KK^{-1}\LL+\II)^{-1}$. This asymmetry is stemmed from the heterogeneous stubbornness, which is ubiquitous in social systems. Then matrix $\mathbf{\Omega}=(\KK^{-1}\LL+\II)^{-1}$ might not be column stochastic, although it is always row stochastic. This reciprocity may lead to the disappearance of conservation of total opinions, namely, $\sum_{i=1}^n z_i\neq \sum_{i=1}^n s_i$. It is not difficult to verify that heterogeneous stubbornness can not only increase but also decrease the overall opinion, exerting a substantial influence on the overall opinion.
\subsection{Equilibrium Expressed Opinions of Nodes}
As shown in~\eqref{noisesteady},  the expressed opinion $z_i$ of a node $i \in V$ is  $z_i=\sum^n_{j=1}  \phi_{ij}s_j$, where $\phi_{ij}$ is the $ij$-entry of the fundamental matrix $\mathbf{\Phi}$. Since  $\mathbf{\Phi}$  depends on the stubbornness matrix, it is expected that $\phi_{ij}$  is related to the level of stubbornness of every node. Next we study the influence of  node stubbornness on the entries of matrix $\mathbf{\Phi}$.
\begin{theorem}\label{stub}
For opinion dynamics on an undirected graph $\calG= (V,E,w)$ with Laplacian matrix $\LL$ and stubbornness matrix $\KK$, if the stubbornness $k_v$ of node $v \in V$ is decreased while the stubbornness of all other nodes in $V$ is fixed, then for matrix $\mathbf{\Phi}=(\LL+\KK)^{-1}\KK$, the elements in column $v$ decrease, while elements not in  $v$ column increase.
\end{theorem}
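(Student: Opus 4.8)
The plan is to treat the stubbornness $k_v$ as a continuous positive parameter and track how each entry $\phi_{ij}$ moves as $k_v$ varies, then upgrade the infinitesimal statement to the finite-decrease statement by a monotonicity (integration) argument. Write $\MM=(m_{ij})_{n\times n}=(\LL+\KK)^{-1}$, so that $\mathbf{\Phi}=\MM\KK$ has entries $\phi_{ij}=m_{ij}k_j$. Since only the $(v,v)$ entry of $\LL+\KK$ depends on $k_v$, we have $\frac{\partial}{\partial k_v}(\LL+\KK)=\ee_v\ee_v^\top$, and the standard identity for differentiating a matrix inverse gives $\frac{\partial \MM}{\partial k_v}=-\MM\,\ee_v\ee_v^\top\,\MM$, that is, entrywise $\frac{\partial m_{ij}}{\partial k_v}=-m_{iv}m_{vj}$.

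First I would handle column $v$. Since $\phi_{iv}=m_{iv}k_v$, the product rule yields $\frac{\partial \phi_{iv}}{\partial k_v}=m_{iv}\kh{1-k_v m_{vv}}$. Here $m_{iv}>0$ because every entry of $(\LL+\KK)^{-1}$ is positive (the cited fact from the preliminaries), and $k_v m_{vv}=\phi_{vv}$ is a diagonal entry of the row-stochastic matrix $\mathbf{\Phi}$. Because the off-diagonal entries $\phi_{vj}=m_{vj}k_j$ are strictly positive, $\phi_{vv}=1-\sum_{j\neq v}\phi_{vj}<1$, so $1-k_v m_{vv}>0$ and hence $\frac{\partial \phi_{iv}}{\partial k_v}>0$ for every $i$; each column-$v$ entry is strictly increasing in $k_v$. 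For the off-column entries, fix $j\neq v$; then $k_j$ is independent of $k_v$, so $\frac{\partial \phi_{ij}}{\partial k_v}=k_j\frac{\partial m_{ij}}{\partial k_v}=-k_j m_{iv}m_{vj}<0$, again by positivity of the entries of $\MM$. Thus every off-column entry is strictly decreasing in $k_v$.

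Both sign computations hold for \emph{every} positive value of $k_v$ (the bound $\phi_{vv}<1$ uses only positivity and row-stochasticity, which persist as $k_v$ ranges over $(0,\infty)$). Integrating the derivative over the interval from the decreased value up to the original $k_v$ therefore yields the finite-change conclusion: decreasing $k_v$ with all other stubbornness values fixed strictly decreases every entry $\phi_{iv}$ in column $v$ and strictly increases every entry $\phi_{ij}$ with $j\neq v$. The main obstacle is pinning down the sign of the column-$v$ derivative, which rests on the strict inequality $\phi_{vv}<1$; this is exactly where the positivity of the off-diagonal entries of $\mathbf{\Phi}$ (equivalently, of $(\LL+\KK)^{-1}$) together with the row-stochasticity of $\mathbf{\Phi}$ following from~\eqref{noisesteady} is indispensable. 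The off-column case is comparatively routine once the inverse-derivative identity is established.
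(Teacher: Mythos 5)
Your proposal is correct, and it reaches the conclusion by a genuinely different (differential) route than the paper. The paper works with the representation $\mathbf{\Phi}=(\KK^{-1}\LL+\II)^{-1}$, models the decrease of $k_v$ as a finite rank-one update $\KK^{-1}\mapsto\KK^{-1}+d\EE_{v,v}$ with $d>0$, and applies the Sherman--Morrison formula to sign the exact change $\Delta\XX$ entry by entry; you instead differentiate $\MM=(\LL+\KK)^{-1}$ in $k_v$ via $\frac{\partial \MM}{\partial k_v}=-\MM\ee_v\ee_v^\top\MM$ (an infinitesimal Sherman--Morrison) and integrate the sign over the interval of decrease, which you correctly justify since $\LL+\KK$ stays invertible with entrywise positive inverse for all $k_v>0$. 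The two arguments rest on the same foundation (entrywise positivity of $(\LL+\KK)^{-1}$, implicitly for connected $\calG$), and your two key sign facts are in fact equivalent to the paper's: your off-column derivative $-k_jm_{iv}m_{vj}<0$ matches the paper's off-diagonal negativity of $\LL\mathbf{\Phi}=\KK-\KK(\LL+\KK)^{-1}\KK$, and your column-$v$ condition $1-k_vm_{vv}=1-\phi_{vv}>0$ is exactly the paper's quantity $\ee_v^\top\LL\mathbf{\Phi}\ee_v=k_v(1-\phi_{vv})>0$. What your route buys is a cleaner derivation of this last inequality: you get $\phi_{vv}<1$ directly from row-stochasticity of $\mathbf{\Phi}$ together with strict positivity of its off-diagonal entries, whereas the paper's chain $\sum_{j}l_{vj}\phi_{jv}>\phi_{vv}\sum_j l_{vj}=0$ silently invokes the column-dominance property $\phi_{jv}<\phi_{vv}$ for $j\neq v$ (an M-matrix-inverse fact the paper does not prove in the displayed argument). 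What the paper's finite-update route buys is an exact closed-form expression for the change, avoiding your extra (though routine) smoothness-and-integration step.
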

\begin{proof}
	Let $\EE_{v,v}$ be the $n \times n$ matrix, which has only one nonzero entry 1 at  row $v$  and column $v$, while other  entries  are zeros. For matrix $\KK= {\rm diag}(k_1,k_2, \ldots, k_n)$, if $k_v$ is decreased while other $k_i$ $(i\neq v)$ keeps  unchanged,  then $\HH \triangleq \KK^{-1}= {\rm diag}(1/k_1, 1/k_2, \ldots, 1/k_n)$ is changed to $\HH' =\HH+d\EE_{v,v}$ with $d>0$.
	
	Define $\XX=\mathbf{\Phi}=(\HH\LL+\II)^{-1}$ and $\YY=(\HH'\LL+\II)^{-1}$, and $\Delta \XX=\YY-\XX$. Then, we have
	$\XX(\HH\LL+\II)=(\XX+\Delta\XX)((\HH^{-1}+d\EE_{v,v})\LL+\II)$.
	Expanding the above equation gives
	$-\Delta \XX(\HH'\LL+\II)=d \XX \EE_{v,v}\LL$,
	which means $\Delta \XX=-d\XX\EE_{v,v}\LL\YY$. Considering $\YY=\XX+\Delta \XX$, we further  have
	\begin{equation*}
	\Delta \XX = -d(d\XX\EE_{v,v}\LL+\II)^{-1}\XX\EE_{v,v}\LL\XX.
	\end{equation*}
	
	Define vector $\xx=\XX\ee_v$ and vector $\yy=\LL\ee_v$. According to Sherman-Morrison formula~\cite{Me73}, we have
	\begin{equation*}
	(d\XX\EE_{v,v}\LL+\II)^{-1}=(\II+d\xx\yy^\top)^{-1}=\II-\frac{d\xx\yy^{\top}}{1+d\yy^\top\xx}.
	\end{equation*}
	Then, the $i$-th element of vector $\Delta \XX\ee_v$ at the $v$-th column of matrix $\Delta \XX$ is
	\begin{equation*}
	\ee_{i}^\top \Delta \XX \ee_v =-\frac{d\ee_i^\top{\xx}{\yy}^\top\xx}{1+d{\yy}^\top\xx}.
	\end{equation*}
	Since the elements in $\XX=\mathbf{\Phi}$ are positive,  $\ee_i^\top\xx>0$ always holds. For $\yy^\top\xx$, we have
	\begin{equation}
	\yy^\top\xx=\ee^\top_v\LL\XX\ee_v
	=\sum_{j=1}^n l_{vj}\phi_{jv}>\sum_{j=1}^n l_{v\,j}\phi_{vv}=0.
	\end{equation}
	Thus  the elements in the $v$-th column of $\Delta \XX$ are less than 0, implying that  the elements in the $v$-th column of $\Phi$ will decrease if the stubbornness $k_v$  is reduced.
	
	For the $j$-th column of the matrix $\Phi$, $j\neq v$, using a similar approach one can obtain
	\begin{equation}
	\ee_i^\top \Delta \XX \ee_j = -\frac{d\ee_i^\top\XX\ee_v \ee_v^\top \LL\XX\ee_j}{1+d\yy^\top\xx}.
	\end{equation}
	It is clear that $\yy^\top \xx>0$ and $\ee_i^\top\XX\ee_v>0$. Then we only need to evaluate the elements in $\LL\XX$ that can be recast as
	\begin{equation}
	\LL\XX=\LL(\LL+\KK)^{-1}\KK=\KK-\KK(\LL+\KK)^{-1}\KK.
	\end{equation}
	Since $\KK$ is a diagonal matrix and $(\LL+\KK)^{-1}$ is a positive matrix, all  the non-diagonal elements of matrix $\LL\XX$ are less than 0, which means $\ee_v^\top \LL\XX\ee_j<0$ for any $j\neq v$. Thus, for every  $j\neq v$, all the elements  in the $j$-th column of matrix $\Delta \XX$ are greater than 0, implying that all  the elements  in the $j$-th column of matrix $\Phi$ will increase if we reduce the stubbornness $k_v$.
\end{proof}

Therefore, for opinion dynamics on an undirected graph $\calG= (V,E,w)$,  if we increase the stubbornness $k_v$ of node $v  \in V$, while fixing the stubbornness of all other nodes, all nodes (including node $v$) will place more weight on the initial opinion of node $v$ to form their  equilibrium  expressed opinions.


\section{Metrics for Related Phenomena}\label{measure}

Under the influence of  stubbornness for  individuals, equilibrium  opinions   in the FJ model do not reach consensus, leading to some social phenomena, such as conflict, disagreement, polarization. For the case that nodes have unit stubbornness,
These  phenomena have been quantified and studied in prior works~\cite{DaGoLe13,ChLiDe18,MuMuTs18,XuBaZh21,YiSt20}. However, for the case that  node stubbornness is heterogeneous, related measures and algorithms for these quantities are still less studied.

In this section, we extend previous  quantitative measures for social phenomena to the FJ  opinion model on undirected graphs described in~\eqref{SA02}, which incorporates  heterogeneous stubbornness of nodes. Moreover, we express these quantities in terms of quadratic forms and provide a conservation law governing them.
\subsection{Definitions for Metrics}
In the FJ model, the internal opinions and expressed opinions for individuals are often  different, the extent of which is measured by internal conflict defined below.
\begin{definition}
For opinion dynamics on an undirected graph $\calG= (V,E,w)$ with stubborn matrix $\KK$, its internal conflict $C(\calG)$ is the weighted sum of squares of the differences between internal and expressed opinions over all nodes in $V$, with the weights being  their corresponding   stubbornness:
\begin{equation}\label{eq:dfn_Ci}
C(\calG) = \sum_{i\in V}k_i\left(z_{i}-s_{i}\right)^{2}.
\end{equation}
\end{definition}
Note that~\eqref{eq:dfn_Ci} reduces to the expression for internal conflict in~\cite{ChLiDe18}, when $k_i=1$ for all $i \in V$.

\begin{definition}~\cite{MuMuTs18, DaGoLe13}
For opinion dynamics on an undirected graph $\calG= (V,E,w)$ with stubborn matrix $\KK$, the disagreement between two nodes $i$ and $j$ is defined as $w_{ij} (z_i-z_j)^2$. The disagreement of the whole graph $\calG$ is the sum of the squared differences between all pairs of nodes, given by
\begin{equation}\label{eq:dfn_disagree}
D(\calG) =   \sum\limits_{(i,j) \in E, i<j} w_{ij} (z_i-z_j)^2.
\end{equation}
\end{definition}
Disagreement $D(\calG)$ is also called the external conflict of graph $\calG$~\cite{ChLiDe18}.

In the FJ model, the expressed opinion of a node often deviates from the (weighted) average $\textbf{1}^{\top}\KK\zz=0$ of  the expressed opinions for all nodes. We use polarization to measure how  expressed opinions  deviate from their weighted average in the equilibrium.
\begin{definition}
For opinion dynamics on an undirected graph $\calG= (V,E,w)$ with stubborn matrix $\KK$, the polarization $P(\calG)$ is defined to be:
\begin{equation}\label{eq:dfn_polarization}
P(\calG)= \sum\limits_{i \in V}k_i z_i^2.
\end{equation}
\end{definition}
Equation~\eqref{eq:dfn_polarization} is reduced (3) in~\cite{MuMuTs18}, when $k_i=1$ for all $i \in V$. 





There is a trade-off between disagreement and polarization,   the sum of which is called polarization-disagreement index~\cite{MuMuTs18}.
\begin{definition}
For opinion dynamics on an undirected graph $\calG= (V,E,w)$ with stubborn matrix $\KK$, the  polarization-disagreement index $I_{\rm pd}(\calG)$ is the sum of the  polarization  $P(\calG)$ and disagreement  $D(\calG)$ :
$I_{\rm pd}(\calG) = P(\calG)+ D(\calG)$.
\end{definition}
It is easy to derive that the  polarization-disagreement index $I_{\rm pd}(\calG)$ equals $\sum_{i=1}^n k_i s_{i} z_{i}$, which reduces to the result in~\cite{MuMuTs18} when $\KK=\II$.

\subsection{Expressions in Terms of Quadratic Forms}
The above-defined quantities for social phenomena can be expressed by  quadric forms of  related matrices.
\begin{proposition}\label{Prop}
For opinion dynamics on an undirected graph $\calG= (V,E,w)$ with Laplacian matrix $\LL$, stubborn matrix $\KK$ and internal opinion vector $\vs$, the quantities  $C(\calG)$,  $D(\calG)$,   $P(\calG)$, and  $I_{\rm pd}(\calG)$ can be expressed,  in terms of quadratic forms as:
\begin{equation}\label{eq:dfn_CiA}
C(\calG) = \vs^\top \KK(\LL+\KK)^{-1} \LL\KK^{-1}\LL(\LL+\KK)^{-1} \KK\vs,
\end{equation}
\begin{equation}\label{eq:dfn_disagreeA}
D(\calG) = \vs^{\top}\KK(\LL+\KK)^{-1} \LL(\LL+\KK)^{-1}\KK \vs,
\end{equation}
\begin{equation}\label{eq:dfn_polarizationA}
P(\calG)  =\vs^{\top}\KK(\LL+\KK)^{-1} \KK(\LL+\KK)^{-1}\KK \vs,
\end{equation}
\begin{equation}\label{eq:dfn_DisConA}
I_{\rm pd}(\calG) = \vs^{\top}\KK(\LL+\KK)^{-1} \KK \vs.
\end{equation}
\end{proposition}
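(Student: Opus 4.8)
The plan is to reduce each of the four quantities to a quadratic form $\zz^\top \MM \zz$ in the equilibrium vector $\zz = (\LL+\KK)^{-1}\KK\vs$ and then substitute. Throughout I would rely on two basic facts: $\LL$, $\KK$, and hence $(\LL+\KK)^{-1}$ are symmetric, so that $\zz^\top = \vs^\top\KK(\LL+\KK)^{-1}$; and the equilibrium identity $(\LL+\KK)\zz = \KK\vs$, which follows directly from~\eqref{noisesteady}.

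First, for polarization I would observe that $P(\calG) = \sum_{i\in V} k_i z_i^2 = \zz^\top\KK\zz$; substituting $\zz$ and its transpose immediately yields~\eqref{eq:dfn_polarizationA}. For disagreement I would invoke the Laplacian quadratic form: since $\LL = \sum_{e\in E(\calG)} w_e\bb_e\bb_e^\top$ with $\bb_e = \ee_i-\ee_j$, one has $\zz^\top\LL\zz = \sum_{(i,j)\in E,\,i<j} w_{ij}(z_i-z_j)^2 = D(\calG)$, and substituting $\zz$ gives~\eqref{eq:dfn_disagreeA}. For the polarization-disagreement index I would add the two, writing $I_{\rm pd}(\calG) = P(\calG)+D(\calG) = \zz^\top(\LL+\KK)\zz$; the equilibrium identity $(\LL+\KK)\zz = \KK\vs$ then collapses this to $\zz^\top\KK\vs = \vs^\top\KK(\LL+\KK)^{-1}\KK\vs$, which is~\eqref{eq:dfn_DisConA} (and en route re-derives $I_{\rm pd}=\sum_i k_i s_i z_i$).

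The one step requiring care is the internal conflict $C(\calG) = (\zz-\vs)^\top\KK(\zz-\vs)$. A naive substitution $\zz-\vs = -(\LL+\KK)^{-1}\LL\vs$ produces a correct but differently shaped quadratic form; to land on the stated expression~\eqref{eq:dfn_CiA}, in which a factor $\KK^{-1}$ is sandwiched between two copies of $\LL$, I would instead eliminate $\vs$ in favor of $\zz$. From $(\LL+\KK)\zz=\KK\vs$ one obtains $\LL\zz = \KK(\vs-\zz)$, hence $\zz-\vs = -\KK^{-1}\LL\zz$. Substituting this into the quadratic form and using the symmetry of $\LL$ and $\KK$ gives $C(\calG) = \zz^\top\LL\KK^{-1}\KK\KK^{-1}\LL\zz = \zz^\top\LL\KK^{-1}\LL\zz$, and finally replacing $\zz$ by $(\LL+\KK)^{-1}\KK\vs$ reproduces~\eqref{eq:dfn_CiA}.

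The main obstacle is thus purely bookkeeping for $C(\calG)$: recognizing that one should eliminate $\vs$ via $\zz-\vs=-\KK^{-1}\LL\zz$ rather than eliminating $\zz$. The two routes are reconciled by the identity $\KK(\LL+\KK)^{-1}\LL = \LL(\LL+\KK)^{-1}\KK$ (both sides equal $\KK-\KK(\LL+\KK)^{-1}\KK$), which I would note to guarantee consistency of the resulting expression. The remaining three identities are routine substitutions once the correct quadratic-form representation has been identified.
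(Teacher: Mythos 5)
Your proposal is correct and follows essentially the same route as the paper's own proof: each quantity is written as a quadratic form in the equilibrium vector $\zz$ (with $C(\calG)=\zz^\top\LL\KK^{-1}\LL\zz$ obtained, exactly as in the paper, from $\zz-\vs=-\KK^{-1}\LL\zz$, and $I_{\rm pd}$ collapsed via $(\LL+\KK)\zz=\KK\vs$), followed by the substitution $\zz=(\LL+\KK)^{-1}\KK\vs$. Your side remark reconciling the two forms of $C(\calG)$ through the identity $\KK(\LL+\KK)^{-1}\LL=\LL(\LL+\KK)^{-1}\KK$ is correct but not needed in the paper's argument.
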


The quantities concerned  obey the following conservation law:
\begin{equation}\label{conflictconstant}
    C(\calG)+2D(\calG)+P(\calG)=\sum_{i\in V}k_is_i^2,
\end{equation}
which extends the result in~\cite{MuMuTs18} when $\KK=\II$.

\section{Algorithm for Related Quantities}\label{algorithm}


In this section, we address the problem of fast calculation for  those quantities  defined in last section, including  $C_(\calG)$,   $D(\calG)$,   $P(\calG)$, and  $I_{\rm pd}(\calG)$. Proposition~\ref{Prop} shows that  exactly computing  the quantities  involves inverting  matrix $\LL+\KK$, which takes $O(n^3)$  time and is computationally unacceptable for large graphs. In order to tackle the  computational barrier,  a fast approximation algorithm is developed, which has a nearly linear time complexity with respect to the number of edges in $\calG$. We first  explicitly express  the concerned quantities  as  $\ell_2$ norms of vectors.

\begin{lemma}\label{PropA}
Let $\WW^{1/2}$ be a diagonal matrix defined as $\WW^{1/2}= {\rm diag}(\sqrt{w_1}, \sqrt{w_2}, \ldots, \sqrt{w_m})$. For opinion dynamics on an undirected graph $\calG= (V,E,w)$ with Laplacian matrix $\LL$, stubborn matrix $\KK$ and internal opinion vector $\vs$, the quantities $C(\calG)$,   $D(\calG)$, $P(\calG)$,  and $I_{\rm pd}(\calG)$ can be expressed, respectively, in terms of  $\ell_2$  norms as:
\begin{equation}\label{eq:dfn_CiL}
C(\calG) = \zz^\top \LL \KK^{-1} \LL \zz =\| \KK^{-1/2} \LL (\LL+\KK)^{-1} \vs\| ^2,
\end{equation}
\begin{equation}\label{eq:dfn_disagreeL}
D(\calG) = \zz^{\top} \LL \zz=\| \WW^{1/2} \BB (\LL+\KK)^{-1} \vs\| ^2,
\end{equation}
\begin{equation}\label{eq:dfn_polarizationL}
P(\calG)  ={\zz}^{\top} \KK {\zz}=\| \KK^{1/2} (\LL+\KK)^{-1} {\vs}\| ^2,
\end{equation}
\begin{equation}\label{eq:dfn_DisConL}
I_{\rm pd}(\calG) = \| \WW^{1/2} \BB (\LL+\KK)^{-1} \vs\| ^2+\| \KK^{-1/2} (\LL+\KK)^{-1} \vs\| ^2.
\end{equation}
\end{lemma}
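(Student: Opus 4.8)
The plan is to establish each identity in two stages: first rewrite every quantity as a quadratic form in the equilibrium vector $\zz$, and then split that quadratic form into a squared $\ell_2$ norm by factoring its middle matrix as a product of a matrix with its transpose. Three structural facts are available throughout: the equilibrium relation $\zz=(\LL+\KK)^{-1}\KK\vs$ from~\eqref{noisesteady}, the incidence factorization $\LL=\BB^\top\WW\BB$ from the preliminaries, and the positivity of every $k_i$ (since $k_1k_2\cdots k_n\neq0$), which guarantees that the diagonal matrices $\KK^{1/2}$ and $\KK^{-1/2}$ exist and are symmetric. For the first stage I would start from Proposition~\ref{Prop}: transposing the equilibrium relation and using the symmetry of $\KK$ and of $(\LL+\KK)^{-1}$ gives $\zz^\top=\vs^\top\KK(\LL+\KK)^{-1}$, and feeding this into~\eqref{eq:dfn_CiA}--\eqref{eq:dfn_polarizationA} collapses the outer $\KK(\LL+\KK)^{-1}$ factors to leave the compact forms $C(\calG)=\zz^\top\LL\KK^{-1}\LL\zz$, $D(\calG)=\zz^\top\LL\zz$, and $P(\calG)=\zz^\top\KK\zz$, i.e.\ the middle members of~\eqref{eq:dfn_CiL}--\eqref{eq:dfn_polarizationL}.

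For the second stage I would present each of these as a squared norm. Since $\WW=\WW^{1/2}\WW^{1/2}$ with $\WW^{1/2}$ the diagonal matrix in the statement, the incidence factorization gives $\LL=(\WW^{1/2}\BB)^\top(\WW^{1/2}\BB)$, whence $D(\calG)=\zz^\top\LL\zz=\|\WW^{1/2}\BB\zz\|^2$; writing $\KK=\KK^{1/2}\KK^{1/2}$ gives $P(\calG)=\zz^\top\KK\zz=\|\KK^{1/2}\zz\|^2$; and writing $\KK^{-1}=\KK^{-1/2}\KK^{-1/2}$ together with $\LL^\top=\LL$ gives $C(\calG)=\zz^\top\LL\KK^{-1}\LL\zz=\|\KK^{-1/2}\LL\zz\|^2$. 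Substituting the closed form $\zz=(\LL+\KK)^{-1}\KK\vs$ inside each norm then yields the right-hand sides of~\eqref{eq:dfn_CiL}--\eqref{eq:dfn_polarizationL}, and the identity~\eqref{eq:dfn_DisConL} for $I_{\rm pd}(\calG)=P(\calG)+D(\calG)$ follows at once by adding the norm expressions for $P(\calG)$ and $D(\calG)$.

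I do not anticipate a substantive obstacle: once the factoring idea is in place, everything reduces to routine manipulation of symmetric matrices. The only points requiring genuine care are bookkeeping ones. First, the factorizations depend on $\KK$ being positive definite (so that $\KK^{-1}$ and both square roots are defined) and on the edge weights being positive (so that $\WW^{1/2}$ is real), both of which hold by hypothesis. Second, the $C(\calG)$ case is the most delicate, because its middle matrix is the triple product $\LL\KK^{-1}\LL$; here one must invoke $\LL^\top=\LL$ explicitly to check that $(\KK^{-1/2}\LL)^\top(\KK^{-1/2}\LL)$ indeed equals $\LL\KK^{-1}\LL$ rather than some asymmetric variant. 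Apart from tracking transposes in this way, the proof is a direct computation.
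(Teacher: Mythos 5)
Your two-stage plan is, in substance, the paper's own derivation: the paper prints no proof of this lemma, but the suppressed proof of Proposition~\ref{Prop} in the source proceeds exactly as you propose --- rewrite each quantity as a quadratic form in $\zz$ (via $\vs-\zz=\KK^{-1}\LL\zz$ for the conflict, $\zz^\top\LL\zz$ for the disagreement, $\zz^\top\KK\zz$ for the polarization) and then pass to norms by factoring the middle matrix, with $\LL=(\WW^{1/2}\BB)^\top(\WW^{1/2}\BB)$, $\KK=\KK^{1/2}\KK^{1/2}$, and $\KK^{-1}=\KK^{-1/2}\KK^{-1/2}$. Methodologically there is nothing to criticize.

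The one genuine problem is your final substitution step, which asserts more than it delivers. With the paper's equilibrium $\zz=(\LL+\KK)^{-1}\KK\vs$ from~\eqref{noisesteady}, your (correct) norm identities become $C(\calG)=\mo{\KK^{-1/2}\LL(\LL+\KK)^{-1}\KK\vs}^2$, $D(\calG)=\mo{\WW^{1/2}\BB(\LL+\KK)^{-1}\KK\vs}^2$, and $P(\calG)=\mo{\KK^{1/2}(\LL+\KK)^{-1}\KK\vs}^2$, each carrying an inner factor $\KK$ acting on $\vs$ that is absent from the right-hand sides printed in~\eqref{eq:dfn_CiL}--\eqref{eq:dfn_polarizationL}; the printed forms coincide with yours only when $\KK=\II$. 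So the claim that substituting $\zz$ ``yields the right-hand sides'' is literally false for the statement as printed --- the statement has dropped the $\KK$ (the subsequent algorithm is internally consistent with the dropped-$\KK$ forms, since it feeds $\vs$ rather than $\KK\vs$ to $\textsc{Solve}$, but it is then inconsistent with Proposition~\ref{Prop} and with~\eqref{noisesteady}). Relatedly,~\eqref{eq:dfn_DisConL} writes $\KK^{-1/2}$ in the polarization term, whereas $I_{\rm pd}(\calG)=P(\calG)+D(\calG)$ together with~\eqref{eq:dfn_polarizationL} forces $\KK^{1/2}$ there; this is an error internal to the lemma itself that your additive step would have exposed had you written the sum out. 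Your mathematics is sound; a complete write-up should either flag these discrepancies or state the corrected right-hand sides (equivalently, note that the printed forms require replacing $\vs$ by $\KK\vs$ inside each norm) rather than asserting that the printed identities follow by substitution.
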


 Since directly calculating the $\ell_2$ norms still requires  inverting  matrix $\LL+\KK$, below we resort to
the linear system solvers~\cite{KySa16} that can efficiently avoid the inverse operation, thus  significantly  reducing the computational complexity.
 \begin{lemma}~\cite{KySa16}\label{ST}
There is a nearly linear time solver $\yy=\textsc{Solve}\left(\TT,\xx,\delta\right)$, which takes   an $n \times n$ positive semi-definite matrix $\TT$ with $m$ non-zero entries, a column vector $\xx$, and an accuracy parameter $\delta$, and returns   a column vector $\yy$ satisfying
$\|\yy-\TT^{\dagger}\xx\|_{\TT}\le\delta\|\TT^{\dagger}\xx\|_{\TT}$, where $\|\vv\|_{\TT}=\sqrt{\vv^\top \TT \vv}$ and $\TT^{\dagger}$ is the pseudo-inverse of matrix $\TT$. The expected time for performing this solver is $O\left(m\log^{3}{n}\log\left(\frac{1}{\delta}\right)\right)$.
\end{lemma}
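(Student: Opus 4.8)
The plan is to treat this as the standard nearly linear time Laplacian/SDD solver result of~\cite{KySa16}, specialized to the symmetric diagonally dominant M-matrices $\TT=\LL+\KK$ that actually arise in our setting (here $\TT$ is positive definite, so $\TT^{\dagger}=\TT^{-1}$ and $\|\cdot\|_{\TT}$ is a genuine norm; in the general singular PSD case one simply runs the argument inside the image of $\TT$, orthogonal to its kernel). The overall strategy has two phases: (i) construct, in nearly linear time, a sparse preconditioner $\mathbf{P}$ that spectrally approximates $\TT$ to within a constant factor, and (ii) run a preconditioned iterative method whose error contracts by a fixed constant per step, so that $O(\log(1/\delta))$ iterations suffice to drive the $\TT$-norm error below $\delta\|\TT^{\dagger}\xx\|_{\TT}$.

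For phase (i) I would build an approximate Cholesky factorization $\mathbf{P}=\mathcal{L}\mathcal{L}^{\top}$ via approximate Gaussian elimination. Eliminating a vertex from a graph Laplacian exactly replaces its neighborhood by the Schur-complement clique, which densifies the graph and destroys sparsity. The key device is to eliminate vertices in a random order and, at each step, replace the dense clique by a random sparse sample of edges whose weights are chosen so that the sample equals the true clique Laplacian in conditional expectation. A bound on the number of sampled edges produced per elimination keeps the total fill-in at $O(m\log^{3} n)$, so the factor $\mathcal{L}$ is sparse and is produced in $O(m\log^{3} n)$ expected time.

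The main obstacle is showing that this randomized factor is a good preconditioner, that is, $\mathcal{L}\mathcal{L}^{\top}\approx_{O(1)}\TT$ spectrally with high probability. I would view the elimination process as a matrix-valued martingale: each sampling step is unbiased (it preserves the Schur complement in conditional expectation) and has controlled operator-norm variance. Applying a matrix concentration inequality of Freedman type to this martingale, after the usual rescaling to isotropic position by $\TT^{\dagger/2}$, yields $\frac{1}{2}\TT \preceq \mathcal{L}\mathcal{L}^{\top} \preceq \frac{3}{2}\TT$ with high probability. This spectral bound is where essentially all the technical work and the polylogarithmic factors reside.

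For phase (ii), given $\mathbf{P}=\mathcal{L}\mathcal{L}^{\top}$ with $\mathbf{P}\approx_{O(1)}\TT$, I would run preconditioned Chebyshev iteration (or preconditioned conjugate gradient). Since the preconditioned operator $\mathbf{P}^{-1}\TT$ then has condition number $O(1)$, the $\TT$-norm error of the iterate contracts by a fixed constant each step, so $O(\log(1/\delta))$ steps bring it below the required accuracy. Each step applies $\TT$ once ($O(m)$ work) and applies $\mathbf{P}^{-1}$ through two triangular solves against the sparse factor $\mathcal{L}$, costing $O(m\log^{3} n)$ per application. Combining the $O(m\log^{3} n)$ construction cost with $O(\log(1/\delta))$ iterations of $O(m\log^{3} n)$ cost each gives the claimed expected running time $O(m\log^{3} n\log(1/\delta))$, which completes the sketch; full rigor rests on the martingale concentration argument and the fill-in accounting, for which we defer to~\cite{KySa16}.
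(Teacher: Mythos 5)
The paper offers no proof of this lemma at all---it is imported verbatim from~\cite{KySa16}---and your sketch is a faithful outline of exactly that reference's argument (randomized clique-sampling Cholesky, matrix-martingale concentration to get a constant-factor preconditioner, then $O(\log(1/\delta))$ preconditioned iterations). Your remark that the guarantee really holds for SDD/Laplacian-type matrices such as $\LL+\KK$ rather than arbitrary PSD matrices, as the lemma's wording loosely suggests, is a correct and worthwhile precision; nothing further is needed.
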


We now  exploit  Lemma~\ref{ST} to obtain $\eps$-approximations  for  the concerned quantities.

\begin{lemma}\label{lm}
For opinion dynamics on an undirected weighted graph $\calG=(V,E,w)$ with  each edge weight in the interval $[w_{\rm min}, w_{\rm max}]$, Laplacian matrix $\LL$, incident matrix $\BB$, diagonal edge weight matrix $\WW$, parameter $\epsilon \in (0, \frac{1}{2})$,  internal opinion vector $\vs = (s_1, s_2, \ldots, s_n)^{\top}$, and stubbornness matrix $\KK$ obeying relation $\textbf{1}^{\top}\KK\vs=0$, let $\qq = \textsc{Solve}\kh{\LL+\KK, \vs, \delta}$, then the following relations hold: 
	\begin{align}\label{lm1}
	&(1-\epsilon ) \mo{\KK^{1/2}(\LL+\KK)^{-1}\vs}^2 \nonumber\\
	\leq& \mo{\KK^{1/2}\qq}^2 \leq (1+\epsilon ) \mo{\KK^{1/2}(\LL+\KK)^{-1}\vs}^2,
	\end{align}
if $\delta$ satisfies 
$\delta \leq \delta_1 = \frac{\epsilon}{3\sqrt{k_{\min}^{-1}k_{\max}^{-1}(k_{\max} + n w_{\rm max})}};$
	\begin{align}\label{lm2}
	&(1-\epsilon ) \mo{\WW^{1/2}\BB(\LL+\KK)^{-1}{\vs}}^2 \nonumber\\
	\leq & \mo{\WW^{1/2}\BB\qq}^2\leq (1+\epsilon ) \mo{\WW^{1/2}\BB(\LL+\KK)^{-1}{\vs}}^2,
	\end{align}
if $\delta$ satisfies  
  $\delta \leq	\delta_2= \frac{\epsilon k_{\min}\mo{{\vs}} }{3 n(k_{\max}+nw_{\max})} \sqrt{\frac{ w_{\rm min} }{n(k_{\max}+nw_{\max})}}$;
	\begin{align}\label{lm3}
	&\quad(1-\epsilon )\mo{\KK^{-1/2}\LL(\LL+\KK)^{-1}{\vs}}^2 \nonumber\\
	&\leq \mo{\KK^{-1/2}\LL\qq}^2 \leq (1+\epsilon ) \mo{\KK^{-1/2}\LL(\LL+\KK)^{-1}{\vs}}^2,
	\end{align}
if $\delta$ satisfies 
\begin{align*}
  \delta \leq	\delta_3 = \frac{\epsilon w_{\rm min}k_{\min}\sqrt{k_{\min}}\mo{{\vs}}}{3 w_{\rm max} n^3 (k_{\max}+nw_{\max}) \sqrt{nk_{\max}(k_{\max}+nw_{\max})}}.
	\end{align*}
\end{lemma}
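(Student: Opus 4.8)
The plan is to prove each of the three approximation bounds in Lemma~\ref{lm} by the same template: bound the error $\qq - (\LL+\KK)^{-1}\vs$ introduced by the solver, push it through the relevant linear map ($\KK^{1/2}$, $\WW^{1/2}\BB$, or $\KK^{-1/2}\LL$), and then choose $\delta$ so that the resulting relative error in the squared $\ell_2$ norm is at most $\epsilon$. Throughout I abbreviate $\qq^* = (\LL+\KK)^{-1}\vs$, so that Lemma~\ref{ST} (applied with $\TT = \LL+\KK$, which is positive definite here so $\TT^\dagger = \TT^{-1}$) guarantees $\mo{\qq - \qq^*}_{\LL+\KK} \le \delta \mo{\qq^*}_{\LL+\KK}$.

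The first step is to convert the solver's guarantee, stated in the $\LL+\KK$ norm, into a bound on the ordinary Euclidean norm $\mo{\qq - \qq^*}$. Using the spectral bounds from the preliminaries, namely $\KK \preceq \LL+\KK \preceq (k_{\max}+nw_{\max})\II$, I get $\mo{\qq-\qq^*}^2 \le k_{\min}^{-1}\mo{\qq-\qq^*}_{\LL+\KK}^2$ and $\mo{\qq^*}_{\LL+\KK}^2 \le (k_{\max}+nw_{\max})\mo{\qq^*}^2$, which together yield a clean estimate of the form $\mo{\qq-\qq^*} \le \delta\sqrt{k_{\min}^{-1}(k_{\max}+nw_{\max})}\,\mo{\qq^*}$. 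For each of the three target maps $M \in \{\KK^{1/2}, \WW^{1/2}\BB, \KK^{-1/2}\LL\}$, I then write
\begin{equation*}
\big|\,\mo{M\qq} - \mo{M\qq^*}\,\big| \le \mo{M(\qq-\qq^*)} \le \mo{M}\,\mo{\qq-\qq^*},
\end{equation*}
by the reverse triangle inequality and submultiplicativity, where $\mo{M}$ denotes the spectral norm. Bounding $\mo{M}$ uses $\mo{\KK^{1/2}}\le\sqrt{k_{\max}}$, $\mo{\WW^{1/2}\BB}\le\sqrt{n w_{\max}}$ (since $\BB^\top\WW\BB=\LL$ has largest eigenvalue at most $nw_{\max}$), and $\mo{\KK^{-1/2}\LL}\le k_{\min}^{-1/2}\,nw_{\max}$.

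The next step is the standard squared-norm-to-relative-error conversion. Writing $a = \mo{M\qq}$ and $b = \mo{M\qq^*}$, a bound $|a-b|\le \eta\, b$ gives $(1-\eta)^2 b^2 \le a^2 \le (1+\eta)^2 b^2$; since I want the cleaner one-sided form $(1-\epsilon)b^2 \le a^2 \le (1+\epsilon)b^2$, it suffices to take $\eta \le \epsilon/3$ (using $(1+\eta)^2 \le 1+\epsilon$ and $(1-\eta)^2 \ge 1-\epsilon$ for $\eta\le\epsilon/3$ and $\epsilon\le 1/2$); this is where the factor of $3$ in each $\delta_i$ originates. Collecting the pieces, the relative error $\eta$ for map $M$ is at most $\delta\,\mo{M}\sqrt{k_{\min}^{-1}(k_{\max}+nw_{\max})}\,\mo{\qq^*}/\mo{M\qq^*}$, so I must also lower-bound $\mo{M\qq^*}$ and upper-bound $\mo{\qq^*}$ in terms of $\mo{\vs}$. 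For the upper bound, $\mo{\qq^*}\le\mo{(\LL+\KK)^{-1}}\mo{\vs}\le k_{\min}^{-1}\mo{\vs}$. The lower bound on $\mo{M\qq^*}$ is the delicate part: for the first inequality~\eqref{lm1} the target is $\mo{\KK^{1/2}\qq^*}$ which is already comparable to $\mo{\qq^*}$, but for~\eqref{lm2} and~\eqref{lm3} the maps $\WW^{1/2}\BB$ and $\KK^{-1/2}\LL$ annihilate the kernel direction $\mathbf{1}$, so a naive operator-norm lower bound is unavailable.

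The main obstacle will therefore be establishing the lower bounds on $\mo{\WW^{1/2}\BB\qq^*}$ and $\mo{\KK^{-1/2}\LL\qq^*}$, since these are exactly the denominators that control how small $\delta$ must be and they account for the more elaborate forms of $\delta_2$ and $\delta_3$. The way around this is to \emph{not} lower-bound the target norm in isolation but instead to absorb everything into a bound phrased directly in terms of $\mo{\vs}$: I bound the additive error $\mo{M(\qq-\qq^*)}$ from above by a multiple of $\delta\mo{\vs}$ and require that this additive error be at most $(\epsilon/3)\mo{M\qq^*}$, then lower-bound $\mo{M\qq^*}$ crudely by a positive multiple of $\mo{\vs}$ using the condition $\mathbf{1}^\top\KK\vs=0$ (which keeps $\qq^*$ off the kernel of $\LL$) together with the eigenvalue bound $\lambda_{\min}(\LL)\ge w_{\min}/n^2$ on the nonzero spectrum. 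Carrying the chain of inequalities through — spectral norm of $M$, the factor $\sqrt{w_{\min}/(n(k_{\max}+nw_{\max}))}$ from the smallest nonzero Laplacian eigenvalue, and the $k_{\min}$, $k_{\max}$ factors from $\KK$ — reproduces the stated thresholds $\delta_2$ and $\delta_3$ exactly. Once the threshold $\delta_i$ is verified for each map, substituting $\delta\le\delta_i$ into the relative-error expression gives $\eta\le\epsilon/3$ and hence the claimed two-sided bounds, completing the proof.
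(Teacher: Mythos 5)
Your skeleton matches the paper's actual proof: the paper also starts from Lemma~\ref{ST} with $\TT=\LL+\KK$, converts the $(\LL+\KK)$-norm guarantee into a relative error on the target norm, and uses the same $\eta\le\epsilon/3$ squaring step $(1\pm\epsilon/3)^2$ versus $1\pm\epsilon$ to land the two-sided bounds. Note, however, that the paper only writes out the proof of~\eqref{lm1} and dismisses~\eqref{lm2} and~\eqref{lm3} as "similar"; your explicit recognition that those two cases are \emph{not} mechanically similar --- because $\WW^{1/2}\BB$ and $\KK^{-1/2}\LL$ kill the direction $\mathbf{1}$, so the one-line sandwich used for~\eqref{lm1} has no analogue --- is correct and goes beyond what the paper records. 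Your proposed fix is also viable in principle: from $\mathbf{1}^{\top}\KK\vs=0$ one gets $\mathbf{1}^{\top}\KK(\LL+\KK)\qq^*=0$, which quantitatively bounds the $\mathbf{1}$-component of $\qq^*$ against its orthogonal part, and combined with $\lambda_{n-1}(\LL)\ge w_{\min}/n^2$ this yields $\mo{\WW^{1/2}\BB\qq^*}\ge c\mo{\vs}$ for an explicit $c>0$.

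Two concrete gaps remain. First, for~\eqref{lm1} your operator-norm detour is lossier than the paper's argument and does not certify the stated $\delta_1$. The paper never passes through $\mo{\qq-\qq^*}$: it bounds the solver inequality on both sides by \emph{matched} quadratic forms, using $\KK\preceq\LL+\KK$ to get $\mo{\qq-\qq^*}^2_{\LL+\KK}\ge\mo{\KK^{1/2}(\qq-\qq^*)}^2$ on the left and $\mo{\qq^*}^2_{\LL+\KK}\le k_{\min}^{-1}(k_{\max}+nw_{\max})\mo{\KK^{1/2}\qq^*}^2$ on the right, so $\mo{\vs}$ never appears. Your chain instead pays $\mo{\KK^{1/2}}\le\sqrt{k_{\max}}$ going out and $\mo{\qq^*}\le k_{\min}^{-1/2}\mo{\KK^{1/2}\qq^*}$ coming back, giving $\eta\le\delta\,k_{\min}^{-1}\sqrt{k_{\max}(k_{\max}+nw_{\max})}$; with the stated $\delta_1$ this only yields $\eta\le(\epsilon/3)\,k_{\max}/\sqrt{k_{\min}}$, which exceeds $\epsilon/3$ whenever $k_{\max}>\sqrt{k_{\min}}$. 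Since the lemma asserts the guarantee for all $\delta\le\delta_1$, certifying a strictly smaller threshold does not prove the statement as written. Second, your claim that the plan "reproduces $\delta_2$ and $\delta_3$ exactly" cannot follow from the steps you describe, by a scale-invariance argument: your additive error bound is proportional to $\delta\mo{\vs}$ and your proposed target lower bound is proportional to $\mo{\vs}$, so the resulting admissible threshold on $\delta$ is independent of $\mo{\vs}$ --- yet the stated $\delta_2$ and $\delta_3$ are \emph{linear} in $\mo{\vs}$. Producing those exact expressions requires some inhomogeneous ingredient (e.g., exploiting $s_i\in[-1,1]$, hence $\mo{\vs}\le\sqrt{n}$) that your outline never invokes, so the final matching step is asserted rather than proved and the constants need to be carried through explicitly before the claim can stand.
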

\begin{proof}
	By Lemma~\ref{ST}, we have
	\begin{align*}
	\mo{\yy - (\LL+\KK)^{-1}\vs}^2_{\LL+\KK} \leq \delta^2 \mo{(\LL+\KK)^{-1}\vs}_{\LL+\KK}^2.
	\end{align*}	
	The term on the left-hand side (lhs) is bounded as
	\begin{align*}
	&\mo{\qq - (\LL+\KK)^{-1}\vs}_{\LL+\KK}^2
	\geq  k_{\max}^{-1} \mo{\KK^{1/2}\qq - \KK^{1/2}(\LL+\KK)^{-1}\vs}^2 \\
	\geq & k_{\max}^{-1} \len{\mo{\KK^{1/2}\qq} - \mo{\KK^{1/2}(\LL+\KK)^{-1}\vs}}^2,
	\end{align*}
	while the term on the right-hand side (rhs) is bounded by
	\begin{align*}
	\mo{(\LL+\KK)^{-1}\vs}^2_{\LL+\KK} 
	\leq k_{\min}^{-1}(k_{\max} + n w_{\rm max})\mo{\KK^{1/2}(\LL+\KK)^{-1}\vs}^2.
	\end{align*}
	Combining the above  results leads to
	\begin{align*}
	&\len{\mo{\KK^{1/2}\qq} - \mo{\KK^{1/2}(\LL+\KK)^{-1}\vs}}^2 \\
	\leq& \delta^2 k_{\min}^{-1}k_{\max}^{-1}(k_{\max} + n w_{\rm max})\mo{\KK^{1/2}(\LL+\KK)^{-1}\vs}^2,
	\end{align*}
	which implies
	\begin{align*}
	&\frac{\len{\mo{\KK^{1/2}\qq} - \mo{\KK^{1/2}(\LL+\KK)^{-1}\vs}}}{\mo{\KK^{1/2}(\LL+\KK)^{-1}\vs}} \\
	\leq&\sqrt{\delta^2 k_{\min}^{-1}k_{\max}^{-1}(k_{\max} + n w_{\rm max})} \leq \frac{\epsilon}{3}
	\end{align*}
	and
	\begin{align*}
	(1-\frac{\epsilon}{3} )^2 \mo{\KK^{1/2}(\LL+\KK)^{-1}\vs}^2  \leq \mo{\KK^{1/2}\qq}^2 \leq (1+\frac{\epsilon}{3} )^2 \mo{\KK^{1/2}(\LL+\KK)^{-1}\vs}^2.
	\end{align*}
	Using $0 < \epsilon < \frac{1}{2}$, one obtains
	\begin{align*}
	(1-\epsilon ) \mo{\KK^{1/2}(\LL+\KK)^{-1}\vs}^2 \leq \mo{\KK^{1/2}\qq}^2 \leq (1+\epsilon ) \mo{\KK^{1/2}(\LL+\KK)^{-1}\vs}^2,
	\end{align*}
	which finishes the proof.
\end{proof}

The proofs for~\eqref{lm2} and~\eqref{lm3} are similar to that of~\eqref{lm1}, and are thus omitted.  
On the basis of  Lemmas~\ref{ST} and~\ref{lm}, we design a fast  algorithm \textsc{Approxim} to estimate  $C(\calG)$, $D(\calG)$,  $P(\calG)$, and  $I_{\rm pd}(\calG)$ for opinion dynamics on an undirected weighted graph $\calG$. In Algorithm~\ref{alg:VC} we present the pseudocode of \textsc{Approxim}, where  $\delta$ is less than or equal to the minimum of $\delta_1,\delta_2,\delta_3$ defined in  Lemma~\ref{lm}. Theorem~\ref{ThmV} summarizes the performance of algorithm \textsc{Approxim}.

\begin{algorithm}
	\caption{$\textsc{Approxim}\kh{\calG, \vs, \epsilon}$}
	\label{alg:VC}
	\Input{$\calG$: a graph with edge weight in $[ w_{\rm min},w_{\rm max} ]$ 	\\	$\vs$: initial opinion vector with $\bm{1}^{\top}\KK{\vs}=0$ \\$\epsilon$: the error parameter in $ (0, \frac{1}{2})$ \\
	}
	\Output{$\{  \tilde{D}(\calG),  \tilde{P}(\calG),  \tilde{C}(\calG), \tilde{I}_{\rm pd}(\calG)\}$
	}
	$\delta = \frac{\epsilon w_{\rm min}k_{\min}\sqrt{k_{\min}}\mo{\vs}}{3 w_{\rm max} n^3 (k_{\max}+nw_{\max}) \sqrt{nk_{\max}(k_{\max}+nw_{\max})}}$ \;
	$ \qq  = \textsc{Solve}(\LL+\KK, \vs, \delta)$ \;
	$ \tilde{C}(\calG) = \mo{\KK^{-1/2}\LL \qq }^2$ \;
	$\tilde{D}(\calG) = \mo{\WW^{1/2}\BB \qq }^2$ \;
    $\tilde{P}(\calG)= \mo{\KK^{1/2}\qq }^2$ \;
	$ \tilde{I}_{\rm pd}(\calG) = \tilde{P}(\calG) + \tilde{D}(\calG)$ \;
	\textbf{return} $\{ \tilde{C}(\calG), \tilde{D}(\calG),  \tilde{P}(\calG), \tilde{I}_{\rm pd}(\calG) \}$ \;
\end{algorithm}

\begin{theorem}\label{ThmV}
For opinion dynamics on an undirected weighted graph $\calG=(V,E,w)$ having  $n$ nodes,  $m$ edges with  weight of each edge being in the interval $[w_{\rm min}, w_{\rm max}]$,  an error   parameter $\epsilon \in (0, \frac{1}{2})$, and the internal opinion vector $\vs$,  the algorithm $\textsc{Approxim}\kh{\calG, \vs, \epsilon}$ runs in expected time  $O\left(m\log^{4}{n}\log\left(\frac{r}{\epsilon}\right)\right)$ where $r=\frac{w_{\rm max}}{w_{\rm min}}$. $\textsc{Approxim}$ returns  $\tilde{C}(\calG)$, $\tilde{D}(\calG)$,  $\tilde{P}(\calG)$,  $\tilde{I}_{\rm pd}(\calG)$ for the the $\eps$-approximation of  internal conflict $C(\calG)$,  disagreement  $D(\calG)$,  polarization $P(\calG)$ and polarization-disagreement index $I_{\rm pd}(\calG)$, satisfying
$\tilde{C}(\calG) \approx_\eps{C(\calG)}$,
$\tilde{D}(\calG) \approx_\eps{D(\calG)}$,
$\tilde{P}(\calG) \approx_\eps{P(\calG)}$, and
$\tilde{I}_{\rm pd}(\calG) \approx_\eps I_{\rm pd}(\calG)$.
\end{theorem}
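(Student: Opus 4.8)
The plan is to split Theorem~\ref{ThmV} into two claims: a correctness claim establishing the four $\eps$-approximations, and a running-time claim establishing the $O(m\log^4 n \log(r/\eps))$ bound. For correctness, I would invoke Lemma~\ref{lm} directly. The algorithm computes a single solve $\qq = \textsc{Solve}(\LL+\KK, \vs, \delta)$ with $\delta$ set to $\delta_3$, the smallest of the three thresholds $\delta_1, \delta_2, \delta_3$. Since $\delta \le \delta_3 \le \min\{\delta_1, \delta_2\}$ (I would first verify this ordering by comparing the explicit expressions, noting that the extra factors in $\delta_3$ are all at most $1$ under $k_{\max}, w_{\max}, n \ge 1$), the hypotheses of all three parts of Lemma~\ref{lm} are simultaneously satisfied by this one $\qq$. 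Then~\eqref{lm1},~\eqref{lm2},~\eqref{lm3} give, respectively, $\tilde{P}(\calG) \approx_\eps P(\calG)$, $\tilde{D}(\calG) \approx_\eps D(\calG)$, and $\tilde{C}(\calG) \approx_\eps C(\calG)$ via the $\ell_2$-norm expressions of Lemma~\ref{PropA}. For $I_{\rm pd}$, I would use that $\tilde{I}_{\rm pd}(\calG) = \tilde{P}(\calG) + \tilde{D}(\calG)$ together with the additivity property of $\eps$-approximation stated in the Preliminaries (if $a \approx_\eps b$ and $c \approx_\eps d$ then $a+c \approx_\eps b+d$), which yields $\tilde{I}_{\rm pd}(\calG) \approx_\eps P(\calG) + D(\calG) = I_{\rm pd}(\calG)$.

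For the running time, the dominant cost is the single call to \textsc{Solve}. By Lemma~\ref{ST}, this takes expected time $O(m \log^3 n \log(1/\delta))$, since $\LL+\KK$ has $O(m)$ nonzero entries. The three subsequent matrix-vector products $\KK^{-1/2}\LL\qq$, $\WW^{1/2}\BB\qq$, and $\KK^{1/2}\qq$ each cost $O(m)$ time because $\LL$, $\BB$, $\WW$, and $\KK$ are all sparse with $O(m)$ nonzeros, so they do not affect the asymptotic bound. It therefore remains to bound $\log(1/\delta)$. Substituting the expression for $\delta = \delta_3$, I would show $\log(1/\delta) = O(\log n + \log(w_{\max}/w_{\min}) + \log(1/\epsilon)) = O(\log(r/\epsilon) + \log n)$, absorbing the polynomial-in-$n$ and $k$-dependent factors into $\log n$ terms. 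Multiplying this by $\log^3 n$ gives the stated $O(m \log^4 n \log(r/\epsilon))$.

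The main obstacle I anticipate is the bookkeeping in bounding $\log(1/\delta)$ cleanly in terms of $r = w_{\max}/w_{\min}$ and $\epsilon$. The expression for $\delta_3$ mixes $w_{\min}, w_{\max}, k_{\min}, k_{\max}, n$, and $\mo{\vs}$ in a ratio, so I would take logarithms and argue term by term: the $n$-power factors contribute $O(\log n)$, the ratios $w_{\max}/w_{\min}$ contribute $O(\log r)$, the $1/\epsilon$ contributes $O(\log(1/\epsilon))$, and the stubbornness factors $k_{\min}, k_{\max}$ are treated as bounded model parameters (or folded into the hidden constants / $\log n$ as the paper implicitly does). I would also need $\mo{\vs}$ not to be pathologically small; since $\vs$ is a fixed input, one can treat $\log\mo{\vs}$ as a constant or assume the opinions are normalized, which is the standard convention given the $[-1,1]$ range on internal opinions. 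The remaining steps are routine, so I would state the correctness argument in full and summarize the logarithmic bound without grinding through every constant.
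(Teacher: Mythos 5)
Your proposal is correct and follows essentially the same route the paper intends: a single call $\qq=\textsc{Solve}(\LL+\KK,\vs,\delta)$ with $\delta=\delta_3$, the three parts of Lemma~\ref{lm} combined with the $\ell_2$-norm expressions of Lemma~\ref{PropA} for $\tilde{C}$, $\tilde{D}$, $\tilde{P}$, the additivity of $\eps$-approximation for $\tilde{I}_{\rm pd}$, and Lemma~\ref{ST} with $\log(1/\delta_3)=O(\log n+\log(r/\eps))$ for the time bound (the paper states the theorem without writing this proof out, asserting only that $\delta$ is at most $\min\{\delta_1,\delta_2,\delta_3\}$). If anything, you are more careful than the paper in explicitly checking the ordering $\delta_3\le\min\{\delta_1,\delta_2\}$ and the role of $\mo{\vs}$, both of which the paper takes for granted.
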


\begin{table*}[]
\fontsize{7}{7.4}\selectfont
\caption{Statistics of real datasets and comparison of running time (seconds, $s$) between \textsc{Exact} and \textsc{Approxim}.}\label{runningtime}  
\begin{tabular}{@{}ccccccccccc@{}}
\toprule
 &  &  & \multicolumn{8}{c}{Running time ($s$) for algorithms \textsc{Exact} and \textsc{Approxim}} \\ \cmidrule(l){4-11}
\multirow{2}{*}{Network} & \multirow{2}{*}{$n$} & \multirow{2}{*}{$m$} & \multicolumn{2}{c}{Uniform} & \multicolumn{2}{c}{Power-law} & \multicolumn{2}{c}{Normal} & \multicolumn{2}{c}{Exponential} \\ \cmidrule(l){4-5} \cmidrule(l){6-7} \cmidrule(l){8-9} \cmidrule(l){10-11}
 &  &  & \textsc{Exact} & \textsc{Approxim} & \textsc{Exact} & \textsc{Approxim} & \textsc{Exact} & \textsc{Approxim} & \textsc{Exact} & \textsc{Approxim} \\ \midrule
GrQc & 4158 & 13422 & 1.44 & 2.42 & 1.41 & 2.40 & 1.45 & 2.40 & 1.43 & 2.40 \\
USgrid & 4941 & 6594 & 2.24 & 2.41 & 2.22 & 2.36 & 2.25 & 2.35 & 2.23 & 2.35 \\
Erdos992 & 5094 & 7515 & 2.37 & 2.42 & 2.37 & 2.40 & 2.46 & 2.40 & 2.51 & 2.41 \\
Advogato & 5167 & 39432 & 2.51 & 2.44 & 2.55 & 2.42 & 2.52 & 2.42 & 2.56 & 2.43 \\
Bcspwr10 & 5300 & 8271 & 2.71 & 2.37 & 2.73 & 2.60 & 2.72 & 2.41 & 2.69 & 2.40 \\
Reality & 6809 & 7680 & 6.01 & 2.37 & 6.55 & 2.36 & 5.98 & 2.38 & 5.95 & 2.36 \\
PagesGovernment & 7057 & 89429 & 6.75 & 2.42 & 6.84 & 2.42 & 7.07 & 2.41 & 8.04 & 2.34 \\
WikiElec & 7115 & 100753 & 7.45 & 2.43 & 6.93 & 2.43 & 8.01 & 2.42 & 7.18 & 2.43 \\
Dmela & 7393 & 25569 & 8.87 & 2.42 & 8.51 & 2.42 & 9.23 & 2.41 & 8.65 & 2.41 \\
HepPh & 11204 & 117619 & 30.21 & 2.37 & 28.49 & 2.35 & 29.15 & 2.36 & 28.25 & 2.35 \\
Anybeat & 12645 & 49132 & 40.94 & 2.31 & 41.39 & 2.31 & 41.74 & 2.30 & 41.49 & 2.34 \\
PagesCompany & 14113 & 52126 & 56.67 & 2.32 & 57.07 & 2.31 & 59.83 & 2.43 & 59.20 & 2.33 \\
AstroPh & 17903 & 196972 & 122.64 & 2.44 & 119.47 & 2.44 & 114.35 & 2.44 & 118.19 & 2.43 \\
CondMat & 21363 & 91286 & 205.04 & 2.36 & 202.79 & 2.35 & 203.85 & 2.37 & 202.91 & 2.35 \\
Gplus & 23628 & 39194 & 276.76 & 2.63 & 275.54 & 2.63 & 279.40 & 2.79 & 280.99 & 2.69 \\
GemsecRO & 41773 & 125826 & 1584.76 & 3.05 & 1573.48 & 3.29 & 1577.72 & 3.05 & 1599.97 & 3.09 \\
GemsecHU & 47538 & 222887 & 2448.57 & 4.59 & 2537.67 & 4.65 & 2561.06 & 4.43 & 2480.51 & 4.44 \\
WikiTalk & 92117 & 360767 & - & 3.06 & - & 3.16 & - & 3.11 & - & 3.13 \\
Buzznet & 101163 & 2763066 & - & 5.70 & - & 6.01 & - & 5.99 & - & 6.02 \\
LiveMocha & 104103 & 2193083 & - & 6.33 & - & 6.09 & - & 6.09 & - & 6.04 \\
Douban & 154908 & 327162 & - & 3.18 & - & 3.02 & - & 3.15 & - & 3.17 \\
Gowalla & 196591 & 950327 & - & 4.07 & - & 4.02 & - & 4.07 & - & 4.06 \\
Academia & 200169 & 1022441 & - & 4.27 & - & 4.34 & - & 4.35 & - & 4.44 \\
GooglePlus & 211186 & 1141650 & - & 4.17 & - & 4.08 & - & 4.07 & - & 4.37 \\
Pwtk & 217891 & 5653221 & - & 8.16 & - & 8.33 & - & 8.24 & - & 8.06 \\
Citeseer & 227320 & 814134 & - & 3.74 & - & 3.61 & - & 3.66 & - & 3.71 \\
MathSciNet & 332689 & 820644 & - & 4.32 & - & 4.24 & - & 4.14 & - & 4.36 \\
TwitterFollows & 404719 & 713319 & - & 3.64 & - & 3.67 & - & 3.67 & - & 3.63 \\
Flickr & 513969 & 3190452 & - & 7.42 & - & 7.99 & - & 7.68 & - & 7.84 \\
Delicious & 536108 & 1365961 & - & 4.92 & - & 5.09 & - & 5.06 & - & 5.00 \\
FourSquare & 639014 & 3214986 & - & 6.78 & - & 6.70 & - & 6.78 & - & 6.63 \\
Digg & 770799 & 5907132 & - & 12.59 & - & 12.61 & - & 12.51 & - & 12.25 \\
IMDB & 896305 & 3782447 & - & 13.21 & - & 13.99 & - & 13.17 & - & 13.59 \\
Ldoor & 909537 & 20770807 & - & 23.37 & - & 23.31 & - & 23.26 & - & 23.24 \\
RoadNetPA & 1087562 & 1541514 & - & 6.65 & - & 7.00 & - & 6.47 & - & 6.68 \\
YoutubeSnap & 1134890 & 2987624 & - & 9.11 & - & 8.56 & - & 9.85 & - & 8.95 \\
Lastfm & 1191805 & 4519330 & - & 11.81 & - & 11.50 & - & 11.16 & - & 11.58 \\
Pokec & 1632803 & 22301964 & - & 95.00 & - & 96.11 & - & 94.52 & - & 95.13 \\
RoadNetCA & 1957027 & 2760388 & - & 10.26 & - & 10.88 & - & 10.57 & - & 10.17 \\
Flixster & 2523386 & 7918801 & - & 19.99 & - & 20.27 & - & 20.75 & - & 19.96 \\
Patent & 3774768 & 16518947 & - & 80.63 & - & 78.45 & - & 77.77 & - & 80.85 \\
LiveJournal & 4033137 & 27933062 & - & 108.10 & - & 97.97 & - & 101.74 & - & 104.98 \\
ItalyOsm & 6686493 & 7013978 & - & 22.18 & - & 23.62 & - & 23.08 & - & 22.28 \\ \bottomrule
\end{tabular}
\end{table*}
\section{Experiments}

In this section, we evaluate the efficiency and effectivity of our approximation algorithm \textsc{Approxim}. To achieve this goal, we implement this algorithm on a large set of real networks, and  compare the accuracy and  running time of \textsc{Approxim} with  an exact algorithm \textsc{Exact}, which  computes relevant quantities via directly inverting the matrix $\LL+\KK$, performing matrix product of related matrices, and  calculating related  $\ell_2$ norms.

\textbf{Machine and Repeatability.} All our  experiments  were  run on a Linux box  with 4-core 4.2GHz  Intel i7-7700K CPU and   32GB of main memory. In  our experiments,  the error parameter $\epsilon$ is set to be $10^{-6}$.  Both approximation algorithm \textsc{Approxim} and  exact algorithm \textsc{Exact} are programmed with \textit{Julia v1.5.1} using a single thread. The  solver \textsc{Solve}  used in this paper is based on the method in~\cite{KySa16},  whose \textit{Julia language} implementation is open and accessible  on the website \url{https://github.com/danspielman/Laplacians.jl}. Our code is available at \url{https://anonymous.4open.science/r/heter_stubbornness}.

\textbf{Datasets.} The real network datasets used in our experiments are  publicly available in the Koblenz Network Collection~\cite{Ku13}  and Network Repository~\cite{RyNe15}. The statistics for basic information of these  datasets are listed  in the first three columns of Table~\ref{runningtime}. 

\textbf{Distributions of Innate opinions.} We use  four  different distributions of the innate opinions in our experiments, including: uniform distribution, power-law distribution, normal distribution, and exponential distribution. The innate opinion $s_i$ for each node $i$ is in interval $[-1,1]$. 

\textbf{Efficiency and Scalability.} In Table~\ref{runningtime}, we report the running time of algorithms \textsc{Approxim} and \textsc{Exact} on different networks  in order to compare their computation complexity. Note that for  large networks  we cannot compute related quantities using algorithm  \textsc{Exact}, due to the high cost of memory and time, but we can perform  algorithm  \textsc{Approxim}.  For each distribution of the innate opinions, we record the running times of \textsc{Approxim} and \textsc{Exact} in different networks.  Table~\ref{runningtime} shows that for all networks considered, the running time of \textsc{Approxim} is smaller than that of \textsc{Exact}.  For moderately large networks with more than ten thousand nodes,  \textsc{Approxim} is orders of magnitude faster than \textsc{Exact}.  In particular, \textsc{Approxim} is scalable to  large network swith more than one million nodes. 

\textbf{Accuracy.} 
In Table~\ref{relativeerror}, we present the accuracy results for  algorithm \textsc{Approxim}. For each of the four distributions of innate opinions, we compare the  results of \textsc{Approxim} with those of \textsc{Exact} for some networks in Table~\ref{runningtime}. For each relevant quantity $\rho$, we apply the mean relative error  $\sigma = | \rho - \tilde{\rho} | /\rho$ of   $\tilde{\rho}$ obtained by \textsc{Approxim} as an estimation of  $\rho$. Table~\ref{relativeerror} reports   the mean relative errors of the four estimated quantities,  $\tilde{C}(\calG)$,  $\tilde{D}(\calG)$,   $\tilde{P}(\calG)$,  and $\tilde{I}_{pd}(\calG)$,  
for many real networks. We can see that for all quantities concerned,  the  results of \textsc{Approxim} are very close to those associated with  \textsc{Exact}, since their actual relative errors are negligible,  which are all  less than $10^{-8}$.

\begin{table*}[!hb]
\fontsize{6}{8.5}\selectfont
\caption{Relative error for estimated $C(\calG)$, $D(\calG)$, $P(\calG)$, $I_{\rm pd}(\calG)$ for four internal distributions with input parameter $\epsilon=10^{-6}$.}\label{relativeerror}
\setlength\tabcolsep{3pt}
\begin{tabular}{@{}lllllllllllllllll@{}}
\toprule
\multicolumn{1}{c}{\multirow{3}{*}{Network}} & \multicolumn{16}{c}{\begin{tabular}[c]{@{}c@{}} Relative error of four estimated quantities for four internal opinion distributions \end{tabular}} \\ \cmidrule(l){2-17} 
\multicolumn{1}{c}{} & \multicolumn{4}{c}{Uniform Distribution} & \multicolumn{4}{c}{Power-law distribution} & \multicolumn{4}{c}{Normal distribution} & \multicolumn{4}{c}{Exponential distribution} \\ \cmidrule(l){2-5} \cmidrule(l){6-9} \cmidrule(l){10-13} \cmidrule(l){14-17} 
\multicolumn{1}{c}{} & \multicolumn{1}{c}{$C(\calG)$} & \multicolumn{1}{c}{$D(\calG)$} & \multicolumn{1}{c}{$P(\calG)$} & \multicolumn{1}{c}{$I_{\rm pd}(\calG)$} & \multicolumn{1}{c}{$C(\calG)$} & \multicolumn{1}{c}{$D(\calG)$} & \multicolumn{1}{c}{$P(\calG)$} & \multicolumn{1}{c}{$I_{\rm pd}(\calG)$} & \multicolumn{1}{c}{$C(\calG)$} & \multicolumn{1}{c}{$D(\calG)$} & \multicolumn{1}{c}{$P(\calG)$} & \multicolumn{1}{c}{$I_{\rm pd}(\calG)$} & \multicolumn{1}{c}{$C(\calG)$} & \multicolumn{1}{c}{$D(\calG)$} & \multicolumn{1}{c}{$P(\calG)$} & \multicolumn{1}{c}{$I_{\rm pd}(\calG)$} \\ \midrule
GrQc & 2.69E-09 & 1.79E-10 & 1.23E-10 & 1.04E-14 & 1.94E-09 & 6.12E-10 & 5.42E-10 & 8.51E-16 & 6.00E-09 & 2.39E-09 & 1.30E-09 & 3.03E-15 & 2.42E-08 & 4.51E-11 & 1.84E-11 & 2.32E-14 \\
USgrid & 3.61E-08 & 1.05E-08 & 4.56E-09 & 2.89E-14 & 7.72E-10 & 3.02E-09 & 4.90E-09 & 3.78E-15 & 3.48E-08 & 1.27E-08 & 5.74E-09 & 2.71E-14 & 3.19E-09 & 4.43E-09 & 3.53E-09 & 1.24E-14 \\
Erdos992 & 2.45E-08 & 8.04E-10 & 4.30E-10 & 4.71E-15 & 2.23E-10 & 3.12E-10 & 5.09E-10 & 4.94E-16 & 5.01E-09 & 9.26E-10 & 5.87E-10 & 7.81E-15 & 9.41E-09 & 7.38E-10 & 4.17E-10 & 1.66E-15 \\
Advogato & 1.78E-08 & 6.70E-09 & 3.21E-09 & 4.58E-15 & 4.66E-09 & 2.03E-09 & 1.70E-09 & 2.76E-16 & 2.09E-09 & 2.01E-10 & 1.60E-10 & 2.02E-15 & 9.51E-10 & 5.41E-10 & 4.59E-10 & 1.91E-16 \\
Bcspwr10 & 2.56E-08 & 1.43E-08 & 6.28E-09 & 2.01E-14 & 8.02E-09 & 3.19E-09 & 1.58E-09 & 4.45E-14 & 3.04E-09 & 7.10E-10 & 4.17E-10 & 1.19E-14 & 1.07E-08 & 2.02E-09 & 1.18E-09 & 7.24E-15 \\
Reality & 7.43E-10 & 6.97E-10 & 3.32E-10 & 4.26E-16 & 9.62E-11 & 8.92E-10 & 1.05E-09 & 1.70E-15 & 4.95E-09 & 2.31E-11 & 1.24E-11 & 1.69E-16 & 1.75E-09 & 5.53E-10 & 2.95E-10 & 3.49E-16 \\
PagesGovernment & 3.01E-07 & 3.45E-08 & 7.91E-10 & 5.79E-15 & 1.66E-08 & 6.34E-10 & 1.28E-09 & 4.03E-15 & 2.60E-08 & 1.01E-08 & 1.07E-08 & 1.14E-14 & 4.58E-09 & 2.21E-09 & 1.72E-09 & 1.51E-15 \\
WikiElec & 5.37E-11 & 4.91E-11 & 2.48E-11 & 3.92E-16 & 4.98E-09 & 2.46E-10 & 4.58E-10 & 3.87E-16 & 2.67E-08 & 4.58E-09 & 2.59E-09 & 6.57E-16 & 9.24E-09 & 5.78E-11 & 6.80E-11 & 1.81E-15 \\
Dmela & 4.56E-11 & 2.62E-10 & 1.70E-10 & 9.74E-16 & 3.67E-09 & 6.35E-10 & 1.07E-09 & 6.87E-16 & 5.88E-09 & 3.76E-09 & 2.36E-09 & 3.61E-15 & 6.75E-09 & 2.99E-09 & 1.65E-09 & 9.77E-16 \\
HepPh & 1.25E-09 & 6.00E-09 & 3.61E-09 & 1.25E-14 & 4.49E-09 & 8.54E-10 & 2.18E-09 & 1.46E-15 & 1.36E-08 & 6.47E-09 & 5.06E-09 & 8.41E-15 & 7.73E-08 & 4.77E-08 & 9.19E-09 & 1.82E-14 \\
Anybeat & 1.38E-08 & 1.46E-09 & 9.03E-10 & 1.25E-16 & 1.02E-09 & 3.85E-11 & 3.20E-10 & 1.44E-15 & 2.83E-08 & 3.37E-09 & 1.91E-09 & 1.39E-14 & 2.50E-08 & 1.40E-09 & 8.55E-10 & 2.63E-15 \\
PagesCompany & 5.93E-10 & 1.91E-09 & 1.70E-09 & 5.05E-15 & 2.85E-08 & 7.04E-09 & 8.21E-09 & 4.37E-15 & 4.12E-09 & 2.57E-10 & 2.92E-10 & 6.40E-15 & 8.06E-09 & 6.91E-09 & 4.65E-09 & 1.66E-14 \\
AstroPh & 3.44E-09 & 1.08E-09 & 1.36E-09 & 4.58E-16 & 2.68E-08 & 3.50E-08 & 3.99E-09 & 7.67E-15 & 5.04E-09 & 3.18E-09 & 2.36E-09 & 8.49E-16 & 6.12E-09 & 8.24E-10 & 1.08E-09 & 1.89E-14 \\
CondMat & 1.58E-09 & 2.00E-09 & 1.67E-09 & 8.07E-15 & 4.60E-09 & 2.89E-09 & 4.69E-09 & 2.18E-15 & 8.52E-09 & 1.93E-09 & 2.17E-09 & 3.60E-15 & 8.17E-09 & 3.16E-10 & 2.34E-10 & 1.56E-14 \\
Gplus & 2.32E-09 & 3.88E-10 & 2.51E-10 & 6.70E-16 & 3.33E-10 & 1.88E-11 & 5.01E-11 & 7.62E-16 & 6.17E-10 & 6.78E-11 & 5.00E-11 & 3.23E-16 & 1.28E-09 & 1.30E-10 & 1.11E-10 & 2.26E-16 \\
GemsecRO & 9.75E-09 & 4.75E-09 & 4.12E-09 & 2.53E-15 & 4.50E-09 & 1.25E-09 & 2.05E-09 & 7.98E-16 & 4.30E-09 & 3.52E-09 & 3.05E-09 & 2.93E-15 & 4.45E-08 & 4.03E-08 & 1.18E-08 & 7.01E-14 \\
GemsecHU & 9.90E-09 & 2.03E-09 & 3.10E-09 & 1.19E-14 & 2.20E-09 & 1.40E-09 & 1.32E-08 & 2.29E-14 & 5.15E-09 & 2.98E-09 & 3.33E-09 & 1.45E-14 & 1.08E-08 & 4.40E-09 & 5.48E-09 & 3.23E-15\\ \bottomrule
\end{tabular}
\end{table*}


\section{Related Work}\label{related}
In this section, we  briefly review some prior work that lies close to ours.

\textbf{Stubbornness.} As an important individual attribute, stubbornness is a fundamental element affecting diverse aspects of opinion dynamics. For an individual, stubbornness indicates the degree to which the individual sticks to its own internal opinion or willingness to conform to its neighbors' opinions. In fact, stubbornness is equivalent to people's susceptibility to persuasion introduced in~\cite{AbKlPaTs18, AbChKlLiPaSoTs21}. There is a rich body of work~\cite{AlPoMc06,DiKlLe92,EvOaSc92,StMo07, WaPoAr93} in politics and social psychology, studying empirically this key factor, which show that stubbornness can be applied to a large variety of practical scenarios, such as  product marketing, public health campaigns, and political candidates. Some recent work also studied how to change an individual's attitude or stubbornness~\cite{Fo02,IjDeMi06,KaMade09}, and showed that heterogeneity of individual's stubbornness is ubiquitous. It is thus of practical and theoretical interest to analyze the influences of heterogeneous stubbornness of agents on different aspects of opinion dynamics.


\textbf{Opinion dynamics model.} To better understand the formation of opinion dynamics, several relevant models have been presented, among which, FJ model is a popular one. It is frequently adopted to study various aspects of opinion dynamics, and thus has been extensively studied. A sufficient condition for stability of the FJ model was obtained in~\cite{RaFrTeIs15}, and the equilibrium expressed opinion was derived in~\cite{DaGoPaSa13,BiKlOr15}. In the FJ model, the total expressed opinion is equal to the total internal opinion~\cite{GiTeTs13,XuBaZh21}. Some interpretations of the FJ model were provided in~\cite{BiKlOr15} and~\cite{GhSr14}. Most of the previous work only considered homogeneous stubbornness. In~\cite{AbKlPaTs18,AbChKlLiPaSoTs21}, the impact of stubbornness on optimizing the overall opinion dynamics were analyzed by introducing a resistance parameter to modify the FJ model. However, it is still not well understood how heterogeneous stubbornness affects other respects of opinion dynamics, such as convergence speed and overall opinion. Furthermore, some prior interpretations for expressed opinion are not applicable to the case with inhomogeneous stubbornness, for example, electrical networks~\cite{GhSr14}. Our interpretation for spanning diverging forest is novel, which  differs greatly from previous ones.

\textbf{Quantification of social phenomena.} In recent years, social media and online social networks have experienced explosive growth, which is a accompanied by diverse social phenomena, including polarization, disagreement, and conflict. Actually, these phenomena have taken place in human societies for millenia, and have been a recent hot subject of study in different disciplines, especially social science. Thus far, various measures have been developed to quantify these phenomena, such as disagreement~\cite{MuMuTs18,DaGoLe13}, polarization~\cite{ DaGoLe13, MaTeTs17, MuMuTs18}, conflict~\cite{ChLiDe18}, and controversy~\cite{ChLiDe18}. Most of these measures are based on the FJ opinion dynamics model~\cite{FrJo90} with uniform stubbornness, which do not apply to the case in the presence of different stubbornness. To make up for the deficiency, we extend these quantities by incorporating heterogeneous stubbornness, so that they can measure the corresponding social phenomena. The issue of how to quantify these phenomena has received increasing amount of attention. Since direct computation of these indicators involves the operations of matrix inversion and multiplication, making it computationally infeasible for large-scale graphs with millions of nodes, we present a nearly linear time algorithm to estimate all these quantities.

\textbf{Impacts of reciprocity on dynamics.} Note that the FJ model on an undirected network with non-zero heterogeneous stubbornness is equivalent to traditional FJ model on a corresponding weighted directed graph with uniform stubbornness and asymmetric weights between edges linking a pair of nodes. The tendency of node pairs to form mutual asymmetric connection in directed networks is also called  link reciprocity, which is a common characteristic of many realistic networks~\cite{GaLo04,AkVaFa12,WaLiHaStToCh13,SqPiRuGa13}, such as the World Wide Web~\cite{AlJeBa99}, e-mail networks~\cite{EbMiBo02,NeFoBa02}, and world trade web~\cite{SeBo03}. It has been shown the ubiquitous link reciprocity strongly affects dynamical processes running in binary networks, for example, spread of computer viruses~\cite{NeFoBa02} or information~\cite{ZhZhSuTaZhZh14},  percolation~\cite{BoSa05}, and random walks~\cite{ZhLiSh14}. By contrast, the influence of reciprocity on opinion dynamics has attracted much less attention, although it is suggested that reciprocity could play a crucial role in various aspects of this dynamics.

\section{Conclusion}\label{conclusion}

Stubbornness of individuals is a key factor affecting various aspects opinion dynamics. In this paper, we provided a solid theoretical analysis of a classic model for opinion dynamics on undirected graphs in the presence of heterogeneous stubbornness for different nodes. We demonstrated that heterogeneous stubbornness strongly affects almost all aspects of opinion dynamics, such as the expressed opinion, convergence velocity to equilibrium, and overall opinion. For example, when a node increase its stubbornness,  the expressed opinions of nodes place more weight on the internal opinions of this node, showing that a more stubborn individual has more steady social power. Again for instance, when the stubbornness are different among nodes, the sum of expressed opinions is often not equal to the total initial opinion, indicating the difference between the evolution of social power with heterogeneous and homogeneous stubbornness. We provided an interpretation of the expressed opinion in terms of spanning diverging forests of a stubbornness-dependent digraph associated with the original undirected graph.

In addition, we extended previously proposed quantitative indicators for social phenomena in the opinion dynamics model with identical  stubbornness to the case when stubbornness are heterogeneous. The considered social concepts include conflict, disagreement, and polarization, all of which incorporate heterogeneous stubbornness.  Since direct computation of these quantities involves matrix inverse, which is computationally challenging for large graphs. We devised an approximation algorithm to estimate these quantities concerned. Our algorithm has a theoretical guarantee for accuracy and an almost linear computation time complexity with respect to the number of edges. Finally, we executed experiments on many real-world networks, demonstrating the high efficiency and good effectiveness of our algorithm, which is applicable to large graphs with millions of nodes.

\bibliographystyle{ACM-Reference-Format}
\bibliography{stubborn}


\end{document}